\newtheorem{lemma}{Lemma}[section]
\newtheorem{theorem}[lemma]{Theorem}
\newtheorem{proposition}[lemma]{Proposition}
\newtheorem{corollary}[lemma]{Corollary}
\newtheorem{italicdefinition}[lemma]{Definition}
\newenvironment{definition}{\begin{italicdefinition}\em}{\end{italicdefinition}}
\newtheorem{italicremark}[lemma]{Remark}
\newenvironment{remark}{\begin{italicremark}\em}{\end{italicremark}}
\newtheorem{italicexample}[lemma]{Example}
\newtheorem{italicquestion}[lemma]{Question}
\newcommand{\lts}{ \mathcal{L}}
\newcommand{\lfp}{\operatorname{lfp}}
\newcommand{\gfp}{\operatorname{gfp}}
\newcommand{\sem}[1] {  \llbracket #1 \rrbracket  }  
\newcommand{\gsem}[1]{\llparenthesis{\,#1\,}\rrparenthesis}
\newcommand{\LTL}{\ensuremath{\textbf{LTL}}}
\newcommand{\CTL}{\ensuremath{\textbf{CTL}}}
\newcommand{\CTLstar}{\ensuremath{\textbf{CTL}^*}}
\newcommand{\PCTL}{\ensuremath{\textbf{PCTL}}}
\newcommand{\PCTLstar}{\ensuremath{\textbf{PCTL}^*}}
\newcommand{\Lukmu}{\ensuremath{\text{\L}\mu}}
\newcommand{\qLmu}{\ensuremath{\text{qL}\mu}}
\newcommand{\strongand}{\odot}
\newcommand{\strongor}{\oplus}
\newcommand{\weakand}{\sqcap}
\newcommand{\weakor}{\sqcup}
\newcommand{\Mu}[2]{\mu #1.\, #2}
\newcommand{\Nu}[2]{\nu #1.\, #2}
\newcommand{\negate}[1]{\overline{#1}}
\newcommand{\GLE}[2]{#1 \vdash #2}
\newcommand{\Implies}{\rightarrow}
\newcommand{\dominates}{\vartriangleright}
\providecommand{\OO}[1]{\mathop{\mathrm{O}}\bigl(#1\bigr)}
\title{{\L}ukasiewicz $\mu$-calculus}
\author{Matteo Mio 
\institute{CWI, Amsterdam (NL)}
\email{miomatteo@gmail.com}
\and
Alex Simpson
\institute{LFCS, School of Informatics \\ University of Edinburgh}
\email{Alex.Simpson@ed.ac.uk}
}
\begin{document}
\maketitle

\begin{abstract}
The paper explores properties of \emph{{\L}ukasiewicz $\mu$-calculus}, a version of the quantitative/probabilistic modal $\mu$-calculus containing both weak and strong conjunctions and disjunctions from {\L}ukasiewicz (fuzzy) logic. We show that this logic encodes the well-known probabilistic temporal logic  \PCTL. And we give a model-checking algorithm for computing the rational denotational value of a formula at any state in a finite rational probabilistic 
nondeterministic transition system.
\end{abstract}

\section{Introduction}

Among logics for expressing properties of nondeterministic (including concurrent) processes, represented as transition systems, Kozen's modal $\mu$-calculus~\cite{Kozen83} plays a fundamental r\^{o}le. It subsumes other temporal logics of processes, such as $\LTL$, $\CTL$ and $\CTLstar$. It does not distinguish bisimilar processes, but separates (finite) non-bisimilar ones. 
More generally, 
by a remarkable result of Janin and Walukiewicz~\cite{JW96},
it is exactly as expressive as 
the bisimulation-invariant fragment of monadic second-order logic. Furthermore, there is an intimate connection with parity games, which offers an intuitive reading of fixed-points, and underpins the existing technology for  model-checking $\mu$-calculus properties. 

For many purposes, it is useful to add probability to the computational model, leading to probabilistic nondeterministic transition systems, cf.~\cite{S95}.
Among the different approaches that have been followed to 
developing analogues of the modal $\mu$-calculus in this setting, the most significant is that introduced independently by Huth and  Kwiatkowska~\cite{HM96} and
by Morgan and McIver~\cite{MM97}, under which a \emph{quantitative} interpretation is given, with formulas denoting values in $[0,1]$.
This quantitative setting permits several variations. In particular, three different quantitative extensions of conjunction from booleans to $[0,1]$ (with  $0$ as false and $1$ as true) arise naturally~\cite{HM96}: 
minimum, $\min(x,y)$; multiplication, $xy$; and the 
strong conjunction (a.k.a.\ {\L}ukasiewicz t-norm)
from {\L}ukasiewicz fuzzy logic, $\max( x+y-1, \, 0)$. 
In each case, there is a dual operator 
giving a corresponding extension of disjunction: maximum, $\max(x,y)$; comultiplication, 
$x+y-xy$; and {\L}ukasiewicz strong disjunction,  $\min(x+y,\, 1)$. The choice of $\min$ and $\max$ for conjunction and disjunction  is particularly natural, since the corresponding $\mu$-calculus, called $\qLmu$ in \cite{MM07},
has an interpretation in terms of 2-player \emph{stochastic} parity games, which extends the usual parity-game interpretation of the ordinary modal $\mu$-calculus. 
This allows the real number denoted by a formula to be understood as the \emph{value} of the associated game~\cite{MM07,MIO2012a}.

The present paper contributes to a programme of ongoing research, one of whose
overall aims is to investigate the extent to which quantitative $\mu$-calculi
play as fundamental a r\^{o}le in the probabilistic setting as that
of Kozen's $\mu$-calculus in the nondeterministic setting. 
The logic $\qLmu$, with  min/max as conjunction/disjunction, is insufficiently expressive. For example, it cannot 
encode the standard probabilistic
temporal logic  \PCTL\ of~\cite{BA1995}.
Nevertheless, richer calculi can be obtained by augmenting $\qLmu$ with the 
other alternatives for conjunction/disjunction, to be used in combination with $\max$ and $\min$. Such extensions  were investigated by the first author in~\cite{MIO2012b,MioThesis}, where 
the game-theoretic interpretation was generalized to accommodate the new operations.

In this paper, we focus on a calculus containing two
different interpretations of conjunction and disjunction:
$\min$ and $\max$ (written as 
$\weakand$ and $\weakor$) and the {\L}ukasiewicz
operations (written as $\strongand$ and $\strongor$). In addition, as
is natural in the quantitative setting, we include a basic operation for
multiplying the value of a formula by a rational constant in $[0,1]$.
Since these operations are all familiar from 
{\L}ukasiewicz fuzzy logic (see, e.g.,~\cite{HJ98}), 
we call the resulting
 logic \emph{{\L}ukasiewicz $\mu$-calculus} ($\Lukmu$).

As our first contribution, we show that 
the standard probabilistic
temporal logic \PCTL~\cite{BA1995} can be encoded in $\Lukmu$.
A similar translation was originally given in 
the first author's PhD thesis~\cite{MioThesis}, 
where $\PCTL$ was translated into a quantitative 
$\mu$-calculus containing all three 
pairs of quantitative conjunction/disjunction operations in combination.
Here, we streamline the treatment by
implementing the observation that the (co)multiplication
operations are not required once the {\L}ukasiewicz operations are in place.
In fact, all that is needed is the encodability of certain \emph{threshold modalities}, see Remark \ref{remark_fragment_plmu} below.

An advantage of the {\L}ukasiewicz $\mu$-calculus considered in the present paper is that it
enjoys the property that the value of a formula in a finite rational model is 
rational, a property which does not hold when the (co)multiplication operations are included in the logic. 
As our second contribution, we exploit this property by giving a 
 (quantitative) model-checking algorithm
that computes the 
value of a $\Lukmu$
formula at a state in a finite rational probabilistic nondeterministic 
transition system. The algorithm adapts the 
approximation-based approach to nested fixed-point calculation to our quantitative calculus.

One could  combine our two contributions and obtain a new model-checking algorithm for $\PCTL$. But this is not advisable since the complexity bounds we obtain for  model-checking $\Lukmu$ are abysmal. 
The positive messages of this paper are rather that $\PCTL$ fits into the conceptually appealing framework of quantitative $\mu$-calculi, and that this framework is itself algorithmically approachable.

\section{Technical background}

\begin{definition}
Given a set $S$ we denote with $\mathcal{D}(S)$ the set of \emph{(discrete) probability distributions} on $S$ defined as $\mathcal{D}(S)\!=\!\{Êd:S\rightarrow[0,1] \ | \ \displaystyle\sum_{s\in S} d(s) = 1\}$. We say that $d\in\mathcal{D}(S)$ is \emph{rational} if $d(s)$ is a rational number, for all $s\in S$.
\end{definition}
\begin{definition}
A \emph{probabilistic nondeterministic transition system} (PNTS) is a pair $(S, \rightarrow)$ where $S$ is a set of states and $\rightarrow\ \subseteq S\times \mathcal{D}(S)$ is the \emph{accessibility} relation. We write $s\not \rightarrow$ if $\{ d \ | \ s\rightarrow d\}=\emptyset$. A PNTS $(S,\rightarrow)$ is \emph{finite rational} if $S$ is finite and  $\bigcup_{s\in S}\{ d \ | \ s\rightarrow d\}$ is a finite set of rational probability distributions.
\end{definition}


We now introduce the novel logic $\Lukmu$ which extends the probabilistic (or quantitative) modal $\mu$-calculus (qL$\mu$) of \cite{HM96,MM97,MM07,AM04}.

\begin{definition}
The logic $\Lukmu$ is generated by the following grammar:
\[
\phi \, ::= \,   X \mid P \mid \negate{P} \mid 
q \, \phi \mid  
 \phi \weakor \phi \mid \phi \weakand \phi \mid \phi \strongor \phi \mid \phi \strongand \phi  \mid \ \Diamond\phi \mid \ \Box \phi \mid  \Mu{X}{\phi} \mid \Nu{X}{\phi} \enspace ,\]
where $q$ ranges over rationals in $[0,1]$, $X$ over a countable set $\texttt{Var}$ of variables and $P$ over a set $\texttt{Prop}$ of propositional letters which come paired with associated complements $\negate{P}$. As a convention we denote with $\underline{1}$ the formula $\nu X.X$ and with $\underline{q}$ the formula $q\, \underline{1}$.
\end{definition}

Thus, $\Lukmu$ extends the syntax of the probabilistic modal $\mu$-calculus by the new pair of connectives ($\odot$, $\oplus$), which we refer to as \emph{\L ukasiewicz conjunction} and \emph{disjunction}, respectively, and a form of \emph{scalar multiplication} ($q\, \phi$) by rationals numbers in $[0,1]$. For mild convenience in the encoding of $\PCTL$ below, we consider a version with unlabelled modalities and propositional letters. However, the approach of this paper easily adapts to a labeled version of $\Lukmu$.

Formulas are interpreted over PNTS's as we now describe.

\begin{definition}
Given a PNTS $(S,\rightarrow)$, an \emph{interpretation} for the variables and propositional letters is a function $\rho:(\texttt{Var}\uplus\texttt{Prop}) \rightarrow (S\rightarrow[0,1])$ such that $\rho(\negate{P})(x)= 1- \rho(P)(x)$. Given a function $f:S\rightarrow [0,1]$ and $X\in\texttt{Var}$ we define the interpretation  $\rho[f/X]$  as $\rho[f/X](X)=f$ and $\rho[f/X](Y)=\rho(Y)$, for $X\neq Y$.
\end{definition}

\begin{definition}
\label{definition:semantics}
The semantics of a $\Lukmu$ formula $\phi$ interpreted over $(S,\rightarrow)$ with interpretation $\rho$ is a function $\sem{\phi}_\rho: S\rightarrow[0,1]$ defined inductively on the structure of $\phi$ as follows:

\begin{center}
\begin{tabular}{ l l l}
$\sem{X}_{\rho}=\rho(X)$    & & $\sem{q\, \phi}_{\rho}(x)= q \cdot \sem{\phi}_\rho (x)$\\
$\sem{P}_{\rho}=\rho(P)$ & & $\sem{\negate{P}}_{\rho}= 1-\rho(P)$ \\
$\sem{\phi \sqcup \psi}_\rho(x)= \max \{ \sem{\phi}_{\rho}(x)Ê  ,  \sem{\psi}_{\rho}(x)  \}$ & & 
$\sem{\phi \sqcap \psi}_\rho(x)= \min \{ \sem{\phi}_{\rho}(x)Ê  ,  \sem{\psi}_{\rho}(x)  \}$ \\ 
$\sem{\phi \oplus \psi}_\rho(x)= \min \{ 1, \sem{\phi}_{\rho}(x)Ê  +  \sem{\psi}_{\rho}(x)  \}$ &&
$\sem{\phi \odot \psi}_\rho(x)= \max \{ 0, \sem{\phi}_{\rho}(x)Ê +   \sem{\psi}_{\rho}(x) -1  \}$ \\ 
$\sem{\Diamond \phi}_{\rho}(x)= \displaystyle \bigsqcup_{x\rightarrow d}\Big( \sum_{y\in X} d(y)\sem{\phi}_\rho(y)\Big)$&&
$\sem{\Box \phi}_{\rho}(x)= \displaystyle \bigsqcap_{x\rightarrow d}\Big(\sum_{y\in X} d(y)\sem{\phi}_\rho(y) \Big)$\\
$\sem{\mu X.\phi} = \lfp \big(  f \mapsto  \sem{\phi}_{\rho[f/X]}\big)$ & & 
$\sem{\mu X.\phi} = \gfp \big(  f \mapsto \sem{\phi}_{\rho[f/X]}\big)$\\
\end{tabular}
\end{center}
It is straightforward to verify that the interpretation of every operator is monotone, thus the existence of least and greatest points in the last two clauses is guaranteed by the the Knaster-Tarski theorem. 
\end{definition}

As customary in fixed-point logics, we presented the logic $\Lukmu$ in positive normal form. A negation operation $\texttt{dual}(\phi)$ can be defined on \emph{closed} formulas by replacing every connective with its dual and $(q\, \phi)$ with $( (1-q)\, \phi)$. It is simple to verify that $\sem{\texttt{dual}(\phi)}_\rho(x)= 1- \sem{\phi}_\rho(x)$. 


Next, we introduce the syntax and the semantics of the logic $\PCTL$ of \cite{BA1995}. We refer to \cite{BaierKatoenBook} for an extensive presentation of this logic. 

The notions of  \emph{paths}, \emph{schedulers} and \emph{Markov runs} in a PNTS are at the basis of the logic $\PCTL$. 
\begin{definition}\label{run_PLTS}
For a given PNTS $\lts=(S,\rightarrow)$ the binary relation $\leadsto_\lts \ \subseteq S\times S $ is defined as follows: $\leadsto_\lts= \{ (s,t) \ | \ \exists d. (s \rightarrow d \ \wedge \ d(t)>0)\}$. Note that $s\not \rightarrow$ if and only if $s\not\leadsto$.  We refer to $(S,\leadsto)$ as the \emph{graph underlying $\lts$}.
\end{definition}

\begin{definition}
A \emph{path} in a PNTS $\lts=(S,\rightarrow)$ is an ordinary path in the graph $(S,\leadsto)$, i.e., a finite or infinite sequence $\{s_i\}_{i\in I}$ of states such that $s_i\leadsto s_{i+1}$, for all $i+1\in I$. We say that a path is  \emph{maximal} if either it is infinite or it is finite and its last entry is a state $s_n$ without successors, i.e., such that $s_n\not \leadsto$. We denote with $\textnormal{P}(\lts)$ the set of all maximal paths in $\lts$. The set $\textnormal{P}(\lts)$  is endowed with the topology generated by the basic open sets $U_{\vec{s}}=\{  \vec{r}\ | \  \vec{s}\sqsubseteq \vec{r} \}$ where $\vec{s}$ is a finite sequence of states and $\sqsubseteq$ denotes the prefix relation on sequences. The space $\textnormal{P}(\lts)$ is always $0$-dimensional, i.e., the basic sets $U_{\vec{s}}$ are both open and closed and thus form a Boolean algebra. We denote with $\textnormal{P}(s)$ the open set $U_{\{s\}}$ of all maximal paths having $s$ as first state.
\end{definition}

\begin{definition}\label{scheduler_def}
A \emph{scheduler} in a PNTS $(S,\rightarrow)$ is a partial function $\sigma$ from non-empty finite sequences $s_0.\dots s_n$ of states to probability distributions $d\in\mathcal{D}(S)$ such that  $\sigma(s_0.\dots s_n)$ is not defined if and only if $s_n \not\rightarrow$ and, if $\sigma$ is defined at $s_0 . \dots s_n$ with $\sigma(s_0.\dots s_n)=d$, then $s_n\rightarrow d$ holds. A pair $(s,\sigma)$ is called a \emph{Markov run} in $\mathcal{L}$ and denoted by $M^s_\sigma$. It is clear that each Markov run $M^s_\sigma$ can be identified with a (generally) infinite Markov chain (having a tree structure) whose vertices are finite sequences of states and having $\{s\}$ as root.
\end{definition}

Markov runs are useful as they naturally induce probability measures on the space $\textnormal{P}(\mathcal{L})$.

\begin{definition}
Let $\mathcal{L}=(S,\rightarrow)$ be a PNTS and $M^s_\sigma$ a Markov run. We define the measure $m^s_\sigma$ on $\textnormal{P}(\mathcal{L})$ as the unique (by Carath{\'e}odory extension theorem) measure specified by the following assignment of basic open sets:
\begin{center}
$\displaystyle m^s_\sigma\big( \textnormal{U}_{s_0.\dots s_n} \big) = \prod^{n-1}_{i=0} d_i (s_{i+1})$
\end{center} 
 where $d_i=\sigma(s_0.\dots s_i)$ and $\prod\emptyset = 1$. It is simple to verify that $m^s_\sigma$ is a probability measure, i.e., $m^s_\sigma(\textnormal{P}(\mathcal{L}))=1$. We refer to $m^s_\sigma$ as the probability measure on $\textnormal{P}(\mathcal{L})$ induced by the Markov run $M^s_\sigma$.
\end{definition}

We are now ready to specify the syntax and semantics of  $\PCTL$.
\begin{definition}
Let the letter $P$ range over a countable set of propositional symbols $\texttt{Prop}$. The class of  $\PCTL$ \emph{state-formulas} $\phi$ is generated by the following two-sorted grammar:
\begin{center}
$ \phi::= \  \textnormal{true} \ | \ P \ | \ \neg \phi \ | \ \phi \vee \phi \ |  \ \exists  \psi\ |  \ \forall  \psi\ | \ \mathbb{P}^\exists_{\rtimes q}\psi \ | \ \ \mathbb{P}^\forall_{\rtimes q}\psi$
\end{center}
with $q\in\mathbb{Q}\cap[0,1]$ and  $\rtimes\in\{ >, \geq\}$, where \emph{path-formulas} $\psi$ are generated by the simple grammar: $\psi::= \  \circ \phi \ | \ \phi_1 \mathcal{U}\phi_2$. Adopting standard terminology, we refer to the connectives $\circ$ and $\mathcal{U}$ as the \emph{next} and  \emph{until}  operators, respectively.
\end{definition}

\begin{definition}
Given a PNTS $(S,\rightarrow)$, a \emph{$\PCTL$-interpretation} for the propositional letters is a function $\rho:\texttt{Prop}\rightarrow 2^S$, where $2^S$ denotes the powerset of $S$. 
\end{definition}

\begin{definition}
Given a PNTS $(S,\rightarrow)$ and a  $\PCTL$-interpretation $\rho$ for the propositional letters, the semantics $\gsem{\phi}_\rho$ of a  $\PCTL$ state-formula $\phi$ is a subset of $S$ (i.e., $\gsem{\phi}_\rho:S\rightarrow\{0,1\}$) defined by induction on the structure of $\phi$ as follows:
\begin{itemize}
\item $\gsem{ \textnormal{true} }_\rho= S$,
$\gsem{P}_\rho = \rho(P)$,
$\gsem{\phi_1 \vee \phi_2}_{\rho}= \gsem{\phi_1}_\rho \cup \gsem{\phi_2}_{\rho}$,
$\gsem{\neg \phi }_{\rho}= S\setminus \gsem{\phi}_{\rho}$,
 \item $\gsem{\exists \psi }_{\rho}(s) = 1$ if and only there exists $\vec{s}\in \textnormal{P}(s)$ such that that $\vec{s}\in \sem{\psi}$
  \item $\gsem{\forall \psi }_{\rho}(s) = 1$ if and only forall $\vec{s}\in \textnormal{P}(s)$ it holds that $\vec{s}\in \gsem{\psi}_\rho(\vec{s})$
 \item $\gsem{\mathbb{P}^\exists_{\rtimes q} \psi }_{\rho}(s) = 1$ if and only $\big(\bigsqcup_{\sigma} m^s_\sigma (\gsem{\psi}_\rho)\big) \rtimes q$
  \item $\gsem{\mathbb{P}^\forall_{\rtimes q} \psi }_{\rho}(s) = 1$ if and only $\big(\bigsqcap_{\sigma} m^s_\sigma (\gsem{\psi}_\rho)\big) \rtimes q$
\end{itemize}
where $\sigma$ ranges over schedulers and the semantics $\gsem{\psi}_\rho$ of path formulas, defined as a subset of $\textnormal{P}(\mathcal{L})$ (i.e., as a map $\gsem{\psi}_\rho: \textnormal{P}(\mathcal{L})\rightarrow\{0,1\}$) is defined as:
\begin{itemize}
\item $\gsem{\circ \phi}_\rho ( \vec{s}) = 1$ if and only if $| \vec{s} | \geq 2$ (i.e., $\vec{s}=s_0.s_1.\dots$) and $s_1\in \gsem{\phi}_\rho$,
\item $\gsem{\phi_1 \mathcal{U} \phi_2}_\rho( \vec{s}) = 1$ if and only if  $\exists n.\big( (s_n\in\gsem{\phi_2}_\rho) \wedge \forall m< n. (s_m\in\gsem{\phi_1}_\rho)\big)$,
\end{itemize}
\end{definition}
It is simple to verify that, for all path-formulas $\psi$, the set $\gsem{\psi}_\rho$ is Borel measurable \cite{BaierKatoenBook}. Therefore the definition is well specified. Note how the logic  $\PCTL$ can express probabilistic properties, by means of the connectives $\mathbb{P}^\forall_{\rtimes q}$ and $\mathbb{P}^\exists_{\rtimes q}$, as well as (qualitative) properties of the graph underlying the PNTS by means of the quantifiers $\forall$ and $\exists$.

\section{Encoding of  $\PCTL$}\label{section_encoding}
We prove in this section how  $\PCTL$ can be seen as a simple fragment of $\Lukmu$ by means of an explicit encoding. We first introduce a few useful macro formulas in the logic $\Lukmu$ which, crucially, are not expressible in the probabilistic $\mu$-calculus (qL$\mu$).

\begin{definition}\label{thresholds_encoding}
Let $\phi$ be a (possibly open)  $\Lukmu$ formula. We define:
\begin{center}
$\bullet\ \  \mathbb{P}_{>0}\phi= \mu X. (X \oplus \phi) \ \ \bullet\ \ \mathbb{P}_{=1}\phi= \nu X. (X \odot \phi) \ \ \bullet\ \ \mathbb{P}_{> q}\phi=   \mathbb{P}_{>0}(\phi \odot \underline{1-q})  \ \ \bullet\ \ \mathbb{P}_{\geq q}\phi=   \mathbb{P}_{=1}(     \phi \oplus \underline{1-q}) $
\end{center}
for $q\in\mathbb{Q}\cap (0,1)$. We write $\mathbb{P}_{\rtimes q}\phi$, for $q\in\mathbb{Q}\cap [0,1]$, to denote one of the four cases.
\end{definition}
The following proposition describes the denotational semantics of these macro formulas.

\begin{proposition}\label{semantics_threshold}
Let $(S,\rightarrow)$ be a PNTS, $\phi$ a $\Lukmu$ formula and $\rho$ an interpretation of the variables. Then it holds that:
\begin{center}
$\sem{\mathbb{P}_{\rtimes q}\phi}_\rho(s)= \left\{     \begin{array}{l  l}
 						1 & $if $\sem{\phi}_\rho (s) \rtimes q \\
						0 & $otherwise$\\
						\end{array}      \right.$
\end{center}
\end{proposition}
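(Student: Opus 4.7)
The key observation that drives the whole argument is that the semantics of the Łukasiewicz connectives, of scalar multiplication, and of $\underline{q}$ is defined pointwise in $s$, so the semantics of $X \oplus \phi$ and $X \odot \phi$ (as endomaps of $S \to [0,1]$) acts pointwise on $f : S \to [0,1]$. Since both $\lfp$ and $\gfp$ of a pointwise-defined monotone endomap can be computed state-by-state (the prefixed/postfixed points are exactly the pointwise ones), the proposition reduces to four one-variable calculations on $[0,1]$, with $v := \sem{\phi}_\rho(s)$ fixed. I will prove the first two cases directly and then derive the other two by an algebraic reduction.

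\textbf{Case $\mathbb{P}_{>0}\phi$.} The semantics at $s$ is the least fixed point of $F(x) = \min(1, x+v)$ on $[0,1]$. If $v = 0$, then $F$ is the identity and $\lfp F = 0$. If $v > 0$, I claim the only prefixed point is $1$: from $\min(1,x+v) \leq x$ together with $x+v > x$, the minimum must be achieved at $1$, forcing $x = 1$. Hence $\lfp F = 1$ precisely when $v > 0$. \textbf{Case $\mathbb{P}_{=1}\phi$.} Dually, the semantics at $s$ is $\gfp G$ for $G(x) = \max(0, x+v-1)$. If $v = 1$, $G$ is the identity and $\gfp G = 1$. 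If $v < 1$, then $x + v - 1 < x$, so a postfixed point $x \leq G(x)$ forces $G(x) = 0$ and hence $x = 0$; thus $\gfp G = 0$. This gives the result when the parameter is $0$ or $1$.

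\textbf{Cases $\mathbb{P}_{>q}\phi$ and $\mathbb{P}_{\geq q}\phi$ for $q \in (0,1)$.} Here I first compute the semantics of the auxiliary formulas by unfolding the clauses of Definition~\ref{definition:semantics}:
\begin{align*}
\sem{\phi \odot \underline{1-q}}_\rho(s) &= \max(0,\, v + (1-q) - 1) = \max(0,\, v - q),\\
\sem{\phi \oplus \underline{1-q}}_\rho(s) &= \min(1,\, v + (1-q)).
\end{align*}
The first is strictly positive iff $v > q$, and the second equals $1$ iff $v \geq q$. Applying the two cases already established (with $\phi$ replaced by $\phi \odot \underline{1-q}$ and $\phi \oplus \underline{1-q}$ respectively) yields the desired equivalences.

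The only subtlety worth flagging is the justification of the pointwise computation of $\lfp$ and $\gfp$: one needs that, for a monotone $F : (S \to [0,1]) \to (S \to [0,1])$ acting pointwise (i.e.\ $F(f)(s)$ depends only on $f(s)$ through some monotone $F_s : [0,1] \to [0,1]$), one has $\lfp(F)(s) = \lfp(F_s)$, and similarly for $\gfp$. This is immediate from the Knaster–Tarski characterisations since a function $f$ is a (pre/post)fixed point of $F$ iff each value $f(s)$ is a (pre/post)fixed point of $F_s$. With this in hand, all the verifications above are direct, and the proposition follows.
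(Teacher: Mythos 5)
Your proof is correct and follows essentially the same route as the paper's: reduce to a pointwise one-variable fixed-point computation, observe that $x \mapsto \min(1,x+v)$ has least fixed point $1$ exactly when $v>0$ (dually for $\gfp$ of $x \mapsto \max(0,x+v-1)$), and dispatch the $\rtimes q$ cases by the elementary algebra of $\odot$ and $\oplus$ with the constant $\underline{1-q}$. You simply supply more detail than the paper does, in particular the explicit justification that $\lfp$ and $\gfp$ of a pointwise-acting monotone functional can be computed state-by-state, which the paper leaves implicit.
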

\begin{proof}
For the case $\mathbb{P}_{>0}\phi$, observe that the map $x\mapsto q \oplus x$, for a fixed $q\!\in\![0,1]$, has $1$ as unique fixed point when $q\!>\!0$, and $0$ as the least fixed point when $q\!=\!0$. The result then follows trivially. Similarly for $\mathbb{P}_{=1}\phi$. The other cases are trivial.
\end{proof}

The following lemma is also useful.

\begin{lemma}\label{lemma_leadsto}
Let $(S,\rightarrow)$ be a PNTS, $\phi$ a $\Lukmu$ formula and $\rho$ an interpretation of the variables. Then:
\begin{itemize}
\item $\sem{\mathbb{P}_{>0}(\Diamond X)}_\rho (s) = 1$ iff\ $\ \exists t. \big(s\leadsto t \wedge \rho(X)(t)>0\big)$
\item $\sem{\mathbb{P}_{=1}(\Box X)}_\rho (s)=1$ iff\ $\  \forall t. \big(s\leadsto t \rightarrow \rho(X)(t)=1\big)$
\end{itemize}
\end{lemma}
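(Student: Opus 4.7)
The plan is to apply Proposition~\ref{semantics_threshold} to both claims, reducing the statement to the equivalences
\begin{center}
$\sem{\Diamond X}_\rho(s) > 0$ \ iff \ $\exists t.\,(s \leadsto t \wedge \rho(X)(t) > 0)$, \quad and \quad
$\sem{\Box X}_\rho(s) = 1$ \ iff \ $\forall t.\,(s \leadsto t \to \rho(X)(t) = 1)$.
\end{center}
From this point on, the proof is a routine unfolding of the semantics of $\Diamond$ and $\Box$ combined with elementary facts about convex combinations in $[0,1]$.

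For the $\Diamond$-clause, I would recall that $\sem{\Diamond X}_\rho(s) = \bigsqcup_{s\to d}\sum_{y} d(y)\rho(X)(y)$. A supremum of non-negative reals is strictly positive iff some element is strictly positive, so the value is positive iff there exists a $d$ with $s\to d$ and $\sum_y d(y)\rho(X)(y) > 0$. Since all $d(y)$ and $\rho(X)(y)$ are non-negative, this sum is positive iff there exists $t$ with $d(t)>0$ and $\rho(X)(t) > 0$, which is exactly the definition of $s \leadsto t$. Conversely, any such $t$ comes from some $d$ witnessing $s\to d$ with $d(t)>0$, and the corresponding summand is positive.

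For the $\Box$-clause, $\sem{\Box X}_\rho(s) = \bigsqcap_{s\to d}\sum_{y} d(y)\rho(X)(y)$ equals $1$ iff every element of the family is $1$ (an infimum in $[0,1]$ equals $1$ iff all members do). Since $\sum_y d(y) = 1$ and $\rho(X)(y) \leq 1$, the sum $\sum_y d(y)\rho(X)(y)$ equals $1$ iff $\rho(X)(y) = 1$ for every $y$ with $d(y) > 0$. Quantifying over all such $d$ and rewriting ``$\exists d.\, s\to d \wedge d(y)>0$'' as $s\leadsto y$ yields the claim.

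The only nontrivial point is the edge case $s \not\rightarrow$. By Definition~\ref{run_PLTS}, $s\not\rightarrow$ coincides with $s \not\leadsto$, so the right-hand sides are, respectively, vacuously false and vacuously true. On the left, the empty supremum in $[0,1]$ equals $0$ and the empty infimum equals $1$, so $\sem{\Diamond X}_\rho(s)=0$ and $\sem{\Box X}_\rho(s)=1$; by Proposition~\ref{semantics_threshold} these give $\sem{\mathbb{P}_{>0}(\Diamond X)}_\rho(s)=0$ and $\sem{\mathbb{P}_{=1}(\Box X)}_\rho(s)=1$, matching the right-hand sides. No genuine obstacle arises beyond making sure this boundary case is recorded.
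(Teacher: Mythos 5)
Your proposal is correct and follows essentially the same route as the paper: reduce via Proposition~\ref{semantics_threshold} to the conditions $\sem{\Diamond X}_\rho(s)>0$ and $\sem{\Box X}_\rho(s)=1$, then unfold the semantics of the modalities and use that $s\leadsto t$ means $d(t)>0$ for some $d$ with $s\rightarrow d$. Your explicit treatment of the $s\not\rightarrow$ boundary case is a small amount of extra care the paper leaves implicit, but it changes nothing substantive.
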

\begin{proof}
Note that $\sem{\Diamond X}_\rho(s)>0$ iff there exists $s\rightarrow d$ such that $\displaystyle\sum_{t\in S} d(t)\rho(X)(t)>0$ holds. This is the case iff $d(t)\!>\!0$ (i.e., $s\leadsto t$) and $\rho(X)(t)\!>\!0$, for some $t\!\in\! S$. The result then follows by Proposition \ref{semantics_threshold}. The case for $\mathbb{P}_{=1}(\Box X)$ is similar.
\end{proof}

\begin{remark}\label{remark_leadsto}
When considering $\{0,1\}$-valued interpretations for $X$, the macro formula $\mathbb{P}_{>0}\Diamond$ expresses the meaning of the diamond modality in classical modal logic with respect to the graph $(S,\leadsto)$ underlying the PNTS. Similarly, $\mathbb{P}_{=1}\Box$ corresponds to the the classical box modality.
\end{remark}

We are now ready to define the encoding of  $\PCTL$ into $\Lukmu$.

\begin{definition}\label{ENCODING_PCTL_E}
We define the encoding $\mathbf{E}$ from  $\PCTL$ formulas to closed $\Lukmu$ formulas (where $\boxdot\phi$ stands for the $\Lukmu$ formula $\Box\phi \sqcap \Diamond\underline{1}$), by induction on the structure of the  $\PCTL$ formulas $\phi$ as follows:
\begin{enumerate}
\item $\label{enc_case_prop}\mathbf{E}(P)=P$,
\item $\mathbf{E}(\textnormal{true})\!=\! \underline{1}$,
\item $\mathbf{E}(\phi_1\vee \phi_2)= \mathbf{E}(\phi_1) \sqcup \mathbf{E}(\phi_2)$,
\item \label{enc_case_neg} $\mathbf{E}(\neg \phi)= \textnormal{dual}( \mathbf{E}(\phi))$,
\item \label{enc_case_exists_circ}$\mathbf{E}(\exists( \circ  \phi))= \mathbb{P}_{>0}\big(\Diamond\mathbf{E}(\phi)\big)$, 
\item \label{enc_case_forall_circ}$\mathbf{E}(\forall( \circ  \phi))= \mathbb{P}_{=1}\big(\boxdot\mathbf{E}(\phi)\big)$, 
\item \label{enc_case_exists} $\mathbf{E}(\exists(\phi_1 \ \mathcal{U} \ \phi_2))= \mu X. \Big(\mathbf{E}(\phi_2) \sqcup \big(  \mathbf{E}(\phi_1) \sqcap \mathbb{P}_{>0}(\Diamond X)\big)\Big)$,
\item \label{enc_case_forall} $\mathbf{E}(\forall(\phi_1 \ \mathcal{U} \ \phi_2) )=  \mu X. \Big(\mathbf{E}(\phi_2) \sqcup \big(  \mathbf{E}(\phi_1) \sqcap \mathbb{P}_{=1}(\boxdot X)\big)\Big)$,
\item \label{case_threshold_1} $\mathbf{E}(\mathbb{P}^{\exists}_{\rtimes q }(\circ \phi))=\mathbb{P}_{\rtimes q} \big( \Diamond\mathbf{E}(\phi)\big)$,
\item \label{case_threshold_2}$\mathbf{E}(\mathbb{P}^{\forall}_{\rtimes q }(\circ \phi))=\mathbb{P}_{\rtimes q} \big( \boxdot\mathbf{E}(\phi)\big)$,
\item \label{case_threshold_3}$\mathbf{E}(\mathbb{P}^\exists_{\rtimes q}(\phi_1 \mathcal{U} \phi_2))= \mathbb{P}_{\rtimes q} \Big( \mu X. \Big(\mathbf{E}(\phi_2) \sqcup \big(  \mathbf{E}(\phi_1) \sqcap \Diamond X\big)\Big) \Big)$,
\item \label{case_threshold_4}$\mathbf{E}(\mathbb{P}^{\forall}_{\rtimes q}(\phi_1 \mathcal{U} \phi_2))= \mathbb{P}_{\rtimes q} \Big( \mu X. \Big(\mathbf{E}(\phi_2) \sqcup \big(  \mathbf{E}(\phi_1) \sqcap \boxdot X\big)\Big) \Big)$,
\end{enumerate}
Note that Case \ref{enc_case_neg} is well defined since $ \mathbf{E}(\phi)$ is closed by construction.
\end{definition}

\begin{remark}\label{remark_fragment_plmu}
The only occurrences of \L ukasiewicz operators $\{\oplus,\odot\}$ and scalar multiplication $(q\, \phi)$ in encoded $\PCTL$ formulas appear in the formation of the macro formulas $\mathbb{P}_{\rtimes q}(\_ )$ which we refer to as \emph{threshold modalities}. Thus, $\PCTL$ can be also seen as a fragment of qL$\mu$ extended with threshold modalities as primitive operations. With the aid of these modalities the encoding is, manifestly, a straightforward adaption of the standard encoding of CTL into the  modal $\mu$-calculus (see, e.g., \cite{Stirling96}). 
\end{remark}

We are now ready to prove the correctness theorem which holds for arbitrary models.

\begin{theorem}
\label{theorem:pctl}
For every PNTS $(S,\rightarrow)$,  $\PCTL$-interpretation $\rho\!:\!\textnormal{Prop}\rightarrow(S\rightarrow\!\{0,1\})$ of the propositional letters and  $\PCTL$ formula $\phi$, the equality $\gsem{\phi}_\rho(s)=\sem{\mathbf{E}(\phi)}_\rho(s)$ holds, for all $s\in S$.
\end{theorem}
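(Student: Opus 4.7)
The plan is to proceed by structural induction on the $\PCTL$ state-formula $\phi$. A useful preliminary invariant is that, although $\sem{\cdot}_\rho$ takes values in $[0,1]$ in general, every closed subformula of $\mathbf{E}(\phi)$ corresponding to an encoded $\PCTL$ state-formula will turn out to be $\{0,1\}$-valued. This holds because $\rho$ is $\{0,1\}$-valued on propositional letters, the outer threshold modality $\mathbb{P}_{\rtimes q}$ is $\{0,1\}$-valued by Proposition \ref{semantics_threshold}, and the connectives $\weakor$, $\weakand$, and $\texttt{dual}$ preserve $\{0,1\}$-valuedness. With this in hand, the atomic cases ($\textnormal{true}$ and $P$), disjunction, and negation (the latter using $\sem{\texttt{dual}(\phi)}_\rho = 1-\sem{\phi}_\rho$, noted after Definition \ref{definition:semantics}) are immediate.

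For the next-step cases $\exists(\circ\phi)$ and $\forall(\circ\phi)$ I would invoke Lemma \ref{lemma_leadsto} with $X$ interpreted as $\sem{\mathbf{E}(\phi)}_\rho$, which equals $\gsem{\phi}_\rho$ by the induction hypothesis. The subtlety in $\forall(\circ\phi)$ is that $\gsem{\circ\phi}(\vec{s})$ requires $|\vec{s}|\geq 2$, so a state $s$ without successors must refute $\forall(\circ\phi)$; this is precisely the purpose of the factor $\Diamond\underline{1}$ inside $\boxdot$, since $\sem{\Diamond\underline{1}}_\rho(s)=0$ exactly when $s\not\rightarrow$. The non-probabilistic until cases are handled as in the classical encoding of $\CTL$ into the modal $\mu$-calculus: by Remark \ref{remark_leadsto}, $\mathbb{P}_{>0}\Diamond$ and $\mathbb{P}_{=1}\boxdot$ act as the classical diamond and box of the graph $(S,\leadsto)$, and a standard transfinite induction on the fixed-point approximants identifies the least fixed point with the set of states from which some (respectively every) maximal path satisfies $\phi_1\,\mathcal{U}\,\phi_2$. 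For the universal case the argument is by well-founded induction on the tree of maximal paths (all finite, terminating at a $\phi_2$-state), so it applies even to infinite-branching PNTSs.

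The main obstacle is the probabilistic until cases. By Proposition \ref{semantics_threshold}, the outer $\mathbb{P}_{\rtimes q}$ peels off and the task reduces to showing that the inner $\qLmu$ formula $\Mu{X}{\mathbf{E}(\phi_2)\weakor(\mathbf{E}(\phi_1)\weakand \Diamond X)}$ evaluates at $s$ to $\sup_\sigma m^s_\sigma(\gsem{\phi_1\,\mathcal{U}\,\phi_2}_\rho)$, and dually with $\boxdot$ in place of $\Diamond$ and $\inf$ in place of $\sup$. This is the standard Bellman characterisation of maximum and minimum reachability probabilities in probabilistic nondeterministic systems. I would establish each equality in two directions: (i) show that $s\mapsto \sup_\sigma m^s_\sigma(\gsem{\phi_1\,\mathcal{U}\,\phi_2}_\rho)$ is a pre-fixed point of the one-step optimisation operator, so the least fixed point lies below it (dually for $\inf$); (ii) for each ordinal approximant $f_\alpha$ of the least fixed point, construct a scheduler (or family of $\varepsilon$-optimal schedulers) whose induced measure witnesses $f_\alpha(s)$, yielding the converse bound in the limit. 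Step (ii) is the delicate part; the same $\boxdot$-trick used for $\forall(\circ\phi)$ handles states without successors correctly in the $\inf$ direction, after which the outer $\mathbb{P}_{\rtimes q}$ thresholding closes the proof.
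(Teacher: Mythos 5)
Your overall decomposition matches the paper's proof closely: the same $\{0,1\}$-valuedness invariant, the same use of Lemma~\ref{lemma_leadsto} together with the $\boxdot$ correction for successor-free states, and the same reduction of the qualitative until cases to the classical $\CTL$ argument via Remark~\ref{remark_leadsto}. The gap is in the probabilistic until case: your two sub-steps both prove the \emph{same} inequality, and the other one is never established. Write $g(s) = \bigsqcup_\sigma m^s_\sigma(\gsem{\phi_1\,\mathcal{U}\,\phi_2}_\rho)$ and let $F$ be the one-step operator. Your step (i) shows $F(g) \leq g$ and concludes $\sem{\Mu{X}{F(X)}}_\rho \leq g$ by Knaster--Tarski. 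Your step (ii), by exhibiting schedulers whose measures witness the approximants $f_\alpha$, shows $f_\alpha(s) \leq g(s)$ and hence, in the limit, again $\sem{\Mu{X}{F(X)}}_\rho \leq g$; constructing particular schedulers can only ever give lower bounds on a supremum, so it cannot yield "the converse bound". The inequality $g \leq \sem{\Mu{X}{F(X)}}_\rho$ --- which is precisely what separates the \emph{least} fixed point from the other fixed points of $F$, of which $g$ is one --- is therefore missing.

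That direction requires its own argument, and it is the one the paper's appendix treats first: write $\Psi = \gsem{\phi_1\,\mathcal{U}\,\phi_2}_\rho$ as the increasing union of the bounded-horizon events $\Psi_k$ (those paths reaching $\phi_2$ within $k$ steps through $\phi_1$-states), use continuity from below of each measure $m^s_\sigma$ to reduce to bounding $m^s_\sigma(\Psi_k)$, and prove $m^s_\sigma(\Psi_k) \leq \sem{\Mu{X}{F(X)}}_\rho(s)$ by induction on $k$, using the decomposition $m^s_\sigma(\Psi_{k+1}) = \sum_t d(t)\, m^t_{\sigma'}(\Psi_k)$ for the shifted scheduler $\sigma'$ and the unfolding $F(\Mu{X}{F(X)})$. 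A secondary observation: once this direction is supplied, your (i) and (ii) are redundant with one another --- the paper's appendix uses only the transfinite-induction-with-$\varepsilon$-optimal-schedulers route (your (ii)) for the direction you do cover, and your pre-fixed-point argument (i) is a legitimate, slightly slicker alternative to it --- so one of the two can be dropped. The remaining cases of your proposal are correct and essentially identical to the paper's.
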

\begin{proof}[Proof (outline)]
The proof goes by induction on the complexity of $\phi$. Cases \ref{enc_case_prop}--\ref{enc_case_neg} of Definition \ref{ENCODING_PCTL_E} are trivial. Case \ref{enc_case_exists_circ} follows directly from Lemma \ref{lemma_leadsto}. Observing that $\sem{\boxdot\phi}_\rho(s)=0$ if $s\not\leadsto$ and $\sem{\boxdot\phi}_\rho(s)=\sem{\Box \phi}_\rho(s)$ otherwise, also Case \ref{enc_case_forall_circ} is a consequence of Lemma \ref{lemma_leadsto}. Consider cases \ref{enc_case_exists} and \ref{enc_case_forall}. The encoding is of the form $\mu X. (F \sqcup (G \sqcap H(X))$, where $F$ and $G$ (by induction hypothesis) and $H(X)$ (by Proposition \ref{semantics_threshold}) are all $\{0,1\}$-valued. Therefore the functor $f\mapsto \sem{F\sqcup (G\sqcap H(X))}_{\rho[f/X]}$ maps $\{0,1\}$-valued functions to $\{0,1\}$-valued functions and has only $\{0,1\}$-valued fixed-points. It then follows by Remark \ref{remark_leadsto} that the correctness of the encoding for these two cases can  be proved with the standard technique used to prove the correctness of the encoding of CTL into Kozen's $\mu$-calculus (see, e.g., \cite{Stirling96}).
Consider Case \ref{case_threshold_1}. It is immediate to verify that $\bigsqcup_{\sigma}\{ m^s_\sigma( U)\}$, where $U=\gsem{\circ \phi}_\rho=\bigcup\{ U_{\{s.t\}} \ | \ t\in \gsem{\phi}_\rho\}$, is equal (by induction hypothesis) to $\sem{\Diamond \mathbf{E}(\phi)}_\rho(s)$. The desired equality $\gsem{\mathbb{P}^{\exists}_{\rtimes q}\circ\phi}_{\rho}=\sem{\mathbb{P}_{\rtimes q }\Diamond\mathbf{E}(\phi)}_\rho$ then follows by Proposition \ref{semantics_threshold}. Case \ref{case_threshold_2} is similar. The two cases \ref{case_threshold_3} and \ref{case_threshold_4} are similar, thus we just consider case \ref{case_threshold_3}. Let $\phi= \mathbb{P}^\exists_{\rtimes q}(\psi)$ and $\psi= \phi_1 \mathcal{U} \phi_2$.  We denote with $\Psi$ the set of paths $\gsem{\psi}_\rho$. Denote by $F(X)$ the formula $\mathbf{E}(\phi_2) \sqcup (  \mathbf{E}(\phi_1) \sqcap \Diamond X)$.  It is clearly sufficient to prove that the equality $\bigsqcup_\sigma \{ m^s_\sigma(\Psi)\} = \sem{\mu X. F( X)\big) }_\rho(s)$ holds. Note that $\mu X.F(X)$ can be expressed as an equivalent qL$\mu$ formulas by substituting the closed subformulas $\mathbf{E}(\phi_1)$ and $\mathbf{E}(\phi_2)$ with two fresh atomic predicates $P_{i}$ with interpretations $\rho(P_i)=\sem{\mathbf{E}(\phi_i)}$. The equality can then be proved by simple arguments based on the game-semantics of qL$\mu$ (see, e.g., \cite{MM07} and \cite{MIO2012a}), similar to the ones used to prove that  the Kozen's $\mu$-calculus formula $\mu X. (P_2 \vee ( P_1 \wedge \Diamond X))$ has the same denotation of the CTL formula $\exists( P_1 \mathcal{U} P_2)$ (see, e.g., \cite{Stirling96}).
\end{proof}


\section{{\L}ukasiewicz $\mu$-terms}

The aim of the second half of the paper is to show how to compute the 
(rational) denotational value of a {\Lukmu} formula at any state in a finite rational probabilistic transition system. In this section, we build the main machinery for doing this, based on a system of fixed-point terms for defining 
 monotone functions from $[0,1]^n$ to $[0,1]$.
The syntax of \emph{({\L}ukasiewicz) $\mu$-terms} is specified by the grammar:
\[
t \, ::= \, 
x \mid 
q \, t \mid 
t \weakor t \mid t \weakand t \mid t \strongor t \mid t \strongand t \mid \Mu{x}{t} \mid \Nu{x}{t}
\]
Again, $q$ ranges over rationals in $[0,1]$. As expected, the $\mu$ and $\nu$ operators bind their variables. We write $t(x_1, \dots, x_n)$ to mean that all free variables of $t$ are contained in $\{x_1, \dots, x_n\}$. 

The \emph{value} $t(\vec{r})$ (we eschew semantic brackets) of a $\mu$-term $t(x_1, \dots, x_n)$ applied to a vector 
$(r_1, \dots, r_n) \in [0,1]^n$ is defined inductively in the obvious way, cf.\ Definition~\ref{definition:semantics}. (Indeed,  $\mu$-terms form a fragment of $\Lukmu$ of formulas whose value is independent of the transition system in which they are interpreted.)


In Section~\ref{section:model-checking}, the model-checking task will be 
reduced to the problem of computing the value of $\mu$-terms. 
The fundamental property that allows such values to be computed is that, for any $\mu$-term $t(x_1, \dots, x_n)$ and vector of rationals $(q_1, \dots, q_n)$, the value of $t(\vec{q})$ is rational and can be computed from $t$ and $q$. One way of establishing this result is by a simple reduction to the first-order theory of rational linear arithmetic, which provides an indirect means of computing the value of $t(\vec{q})$. The current section presents a brief outline of this approach.
After this, in Section~\ref{section:algorithm}, we provide an alternative
direct algorithm for computing $t(\vec{q})$.

A \emph{linear expression} in variables $x_1, \dots, x_n$ is
an expression 
\[
q_1 x_1 + \dots + q_n x_n + q
\]
where $q_1, \dots, q_n, q$ are real numbers. In the sequel, we only consider \emph{rational} linear expressions, in which $q_1, \dots, q_n, q$ are all rational,
and we henceforth assume 
this property without mention.
We write $e(x_1, \dots, x_n)$ if $e$ is a linear expression in $x_1, \dots, x_n$, in which case, given real numbers $r_1, \dots, r_n$, we write $e(\vec{r})$ for the value of the expression when the variables $\vec{x}$ take values $\vec{r}$.
We also make use of the closure of linear expressions under substitution: given $e(x_1, \dots, x_n)$ and $e_1(y_1, \dots, y_m), \dots, e_n (y_1, \dots, y_m)$,
we write $e(e_1, \dots, e_n)$ for the evident substituted expression in variables $y_1,\dots, y_m$ (which is defined formally by multiplying out and adding coefficients).

The first-order 
theory of \emph{rational linear arithmetic} has 
linear expressions as terms, and strict and non-strict 
inequalities between linear expressions,
\begin{equation}
\label{eqn:inequalities}
e_1 < e_1 \qquad e_1 \leq e_2 \enspace ,
\end{equation}
as atomic formulas. 
Equality can be expressed as the conjunction of two non-strict inequalities and the negation of an atomic formula can itself be expressed as an atomic formula. 
The truth of a first-order formula is given via its interpretation in the reals, or equivalently in the rationals since the inclusion of the latter in the former is an elementary embedding. The theory enjoys quantifier elimination~\cite{Ferrante1975}.

\begin{proposition}
\label{proposition:mu-term:arithmetic}
For every {\L}ukasiewicz $\mu$-term $t(x_1, \dots, x_n)$, 
its graph  
$\{(\vec{x},y) \in [0,1]^{n+1} \mid t(\vec{x}) = y\}$ is definable by a formula $F_t(x_1, \dots, x_n, y)$ in the first-order theory of rational linear arithmetic, where $F_t$ is computable from $t$.
\end{proposition}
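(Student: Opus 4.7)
The plan is to proceed by structural induction on $t$, building $F_t$ directly in each case. A running observation is that every constructor in the grammar is monotone in each argument, so the value of $t$ is monotone in its free variables; this is the fact that made the fixed-point clauses of Definition~\ref{definition:semantics} well-defined, and it is what will let us characterize least and greatest fixed points by a first-order condition.

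The non-fixed-point cases are routine. Set $F_{x_i}(\vec x, y) \equiv (y = x_i)$, writing $y = x_i$ as the conjunction of two non-strict inequalities. Set $F_{q\, t'}(\vec x, y) \equiv \exists z.\, F_{t'}(\vec x, z) \land y = q z$. Each binary operator denotes a piecewise-linear function of its two arguments whose graph is a finite disjunction of conjunctions of linear (in)equalities. For instance,
\[
F_{t_1 \weakor t_2}(\vec x, y) \;\equiv\; \exists z_1 z_2.\; F_{t_1}(\vec x, z_1) \land F_{t_2}(\vec x, z_2) \land \bigl((y = z_1 \land z_2 \le z_1) \lor (y = z_2 \land z_1 \le z_2)\bigr),
\]
and $\weakand$ is symmetric, while the {\L}ukasiewicz cases $\strongor$ and $\strongand$ split on whether $z_1 + z_2$ (respectively $z_1 + z_2 - 1$) lies in $[0,1]$. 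Closure of linear expressions under substitution, recalled just before the proposition, ensures every such $F_t$ truly lives in the linear theory.

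The main obstacle is the fixed-point clauses, where first-order quantifiers must be used. For $t = \Mu{x}{t'(\vec x, x)}$, the induction hypothesis furnishes $F_{t'}(\vec x, x, y)$. Since $t'$ is monotone in $x$, the map $r \mapsto t'(\vec r, r)$ on $[0,1]$ has a least fixed point by Knaster-Tarski, and this least element is itself a fixed point. Hence it is the smallest $y \in [0,1]$ satisfying $y = t'(\vec x, y)$, which is directly expressible as
\[
F_{\Mu{x}{t'}}(\vec x, y) \;\equiv\; F_{t'}(\vec x, y, y) \land 0 \le y \land y \le 1 \land \forall z.\bigl(0 \le z \land z \le 1 \land F_{t'}(\vec x, z, z) \to y \le z\bigr),
\]
where $F_{t'}(\vec x, y, y)$ denotes the substitution obtained by placing $y$ in both the $x$- and $y$-slots of $F_{t'}$. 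The $\Nu{x}{t'}$ case is dual, with $z \le y$ replacing $y \le z$. Proposition~\ref{proposition:mu-term:arithmetic} only asks for definability in the first-order theory, so no quantifier elimination is needed, although by~\cite{Ferrante1975} an equivalent quantifier-free formula exists. In each case $F_t$ is assembled recursively from $t$ and hence is effectively computable from $t$, completing the induction.
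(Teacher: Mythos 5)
Your proposal is correct and follows essentially the same route as the paper: structural induction, with piecewise-linear case splits for the binary operators and the least fixed point characterized first-order as ``$y$ is a fixed point and $y \leq z$ for every fixed point $z$'' (the explicit range constraints $0 \leq y \leq 1$ you add are already implied by the induction hypothesis, since $F_{t'}$ defines the graph as a subset of $[0,1]^{n+2}$, but they do no harm). Your $\strongor$ and $\mu$ clauses match the paper's two illustrative cases almost verbatim.
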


\begin{proof}
The proof is a straightforward induction on the structure of $t$. We consider two cases, in order to illustrate the simple  manipulations  used in the construction of $F_t$.

If $t$ is $t_1 \strongor t_2$ then $F_t$ is the formula
\[
\exists z_1, z_2 .\: F_{t_1}(\vec{x},z_1) \, \wedge \,
  F_{t_2}(\vec{x},z_2) \, \wedge \,
  \left((z_1 + z_2 \leq 1 \wedge z = z_1 + z_2) \, \vee \, 
    (1\leq z_1 + z_2  \wedge z = 1)
  \right)
\]

If $t$ is $\Mu{x_{n+1}}{t'}$ then $F_t$ is the formula
\[
F_{t'}(x_1, \dots, x_n, y,y) \, \wedge \, \forall z.\: F_{t'}(x_1, \dots, x_n, z,z) \Implies y \leq z \enspace .
\]
\end{proof}
\noindent
Proposition~\ref{proposition:mu-term:arithmetic} provides the following method of computing the value $t(\vec{q})$ of  $\mu$-term $t(x_1, \dots, x_n)$ at a rational vector $(q_1, \dots, q_n) \in [0,1]^n$. First construct $F_t(x_1, \dots, x_n,y)$. Next, perform quantifier elimination to obtain an equivalent 
quantifier-free formula $G_t(x_1, \dots, x_n,y)$, and consider its 
instantiation $G_t(q_1, \dots, q_n,y)$ at $\vec{q}$. (Alternatively, obtain 
an equivalent formula $G^{\vec{q}}_t(y)$ by performing quantifier elimination
on $F_t(q_1, \dots, q_n,y)$.) By performing obvious simplifications of atomic formulas in one variable, $G_t(q_1, \dots, q_n,y)$ reduces to a boolean combination of inequalities each having one of the following forms
\[
y \leq q \qquad y < q \qquad y \geq q \qquad y > q \enspace .
\]
By the correctness of $G_t$ there must be a unique rational satisfying the boolean combination of constraints, and this can be extracted in a straightforward way from $G_t(q_1, \dots, q_n,y)$. 

We give a crude (but sufficient for our purposes) complexity analysis of the above procedure. 
In general, for a $\mu$-term $t$ of length $u$ containing $v$ fixed points, the
length of $F_t$ is bounded by
$2^v u c$, for some constant $c$.
The quantifier-elimination procedure in~\cite{Ferrante1975}, when given a formula of length $l$ as input produces a formula of length at most $2^{dl}$ as output, for some constant $d$, and takes time at most $2^{2^{d'l}}$.
Thus the length of the formula $G_t(x_1, \dots, x_n,y)$ is
bounded by $2^{2^v u c d}$, 
and the computation time for $t(\vec{q})$  is $\OO{2^{2^{2^v u c d'}}}$, using a unit cost model for rational arithmetic. 


\section{A direct algorithm for evaluating $\mu$-terms}
\label{section:algorithm}

Our direct approach to computing the values of $\mu$-terms is based on a
simple explicit representation of the functions  defined by such terms.
A \emph{conditioned linear expression} is  a pair, written 
$\GLE{C}{e}$,
where $e$ is a linear expression, and $C$ is a finite set of strict and non-strict inequalities between linear expressions; i.e., each element of
$C$ has one of the forms in~(\ref{eqn:inequalities}).
We write $C(\vec{r})$ for the conjunction of the inequations obtained by instantiating $\vec{r}$ for 
$\vec{x}$ in $C$. Clearly, if $\vec{q}$ is a vector of rationals then it is decidable if $C(\vec{q})$ is true or false. 
The intended meaning of a conditioned linear expression $\GLE{C}{e}$ is that it 
denotes the value $e(\vec{r})$ when applied to a vector of reals $\vec{r}$  for which $C(\vec{r})$ is true, otherwise it is undefined.
A basic property we exploit in the sequel is that every conditioning set $C(x_1, \dots, x_n)$ defines a convex subset $\{(r_1, \dots, r_n) \mid C(\vec{r})\}$ of $\mathbb{R}^n$.


Let $\mathcal{F}$ be a \emph{system} (i.e., finite set) of conditioned linear expresssions in variables $x_1, \dots, x_n$. We say that $\mathcal{F}$ \emph{represents} a function $f \colon [0,1]^n \to [0,1]$ if the following conditions hold:
\begin{enumerate}
\item For all $d_1, \dots, d_n \in [0,1]$, there exists a conditioned linear expression $(\GLE{C}{e}) \in \mathcal{F}$ such that $C(\vec{d})$ is true, and

\item for all $d_1, \dots, d_n \in [0,1]$, and every conditioned linear expression $(\GLE{C}{e}) \in \mathcal{F}$, if  $C(\vec{d})$ is true then $e(\vec{d}) = f(\vec{d})$.

\end{enumerate}

\noindent
Note that, for two conditioned linear expressions $(\GLE{C_1}{e_1}), (\GLE{C_2}{e_2}) \in \mathcal{F}$, we do not require different conditioning sets $C_1$ and $C_2$ to be disjoint. However, $e_1$ and $e_2$ must agree on any overlap.

Obviously, the function represented by a system of conditioned linear expressions is unique, when it exists. But not every system represents a function.
One could impose syntactic conditions on a system to ensure that it represents a function, but we shall not pursue this.

While conditioned linear expressions provide a syntax more directly tailored to expressing functions than general logical formulas, their expressivity in this regard coincides with rational linear arithmetic.

\begin{proposition}
\label{proposition:cle:arithmetic}
A function $f\colon [0,1]^n \to [0,1]$ is representable by a system of conditioned linear expressions if and only if its graph 
$\{(\vec{x},y) \in [0,1]^{n+1} \mid f(\vec{x}) = y\}$ is definable by a formula $F(x_1, \dots, x_n, y)$ in the first-order theory of rational linear arithmetic. Moreover, a defining formula and a representing system of conditioned linear equations can each be computed from the other.
\end{proposition}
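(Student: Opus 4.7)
My plan is to treat the two directions separately, with the forward implication being essentially immediate and the reverse one requiring quantifier elimination followed by a geometric refinement.

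For the forward direction, given a representing system $\mathcal{F} = \{\GLE{C_i}{e_i}\}$ for $f$, I would take
\[
F(\vec{x}, y) \;:=\; \bigvee_i \Bigl( \bigwedge_{c \in C_i} c \;\wedge\; y = e_i(\vec{x}) \Bigr)
\]
conjoined with the constraints $0 \leq x_j \leq 1$ and $0 \leq y \leq 1$, each itself a linear inequality. The two defining conditions of a representing system immediately yield that $F$ defines the graph of $f$.

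For the reverse direction, I would first apply the Ferrante--Rackoff quantifier elimination procedure~\cite{Ferrante1975} to $F(\vec{x},y)$, obtaining an equivalent quantifier-free formula $F'$, which I then put in disjunctive normal form $F' \equiv \bigvee_i D_i$. The key geometric observation is that each set $S_i := \{(\vec{x},y) : D_i\}$ is a convex subset of the graph of $f$, being an intersection of open and closed half-spaces. Taking convex combinations of two points in $S_i$ then forces $f$ to be affine on the projection of $S_i$ onto its $\vec{x}$-coordinates: if $(\vec{x}_a, f(\vec{x}_a)), (\vec{x}_b, f(\vec{x}_b)) \in S_i$ and $\lambda \in [0,1]$, then convexity and the graph property together yield $f(\lambda \vec{x}_a + (1{-}\lambda)\vec{x}_b) = \lambda f(\vec{x}_a) + (1{-}\lambda)f(\vec{x}_b)$. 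Hence there is a rational linear expression $e_i$ agreeing with $f$ throughout the projection.

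To extract $e_i$ and its conditioning set $C_i$ algorithmically, I would separate the inequalities of $D_i$ into those not mentioning $y$, upper bounds of the form $y \leq u_j(\vec{x})$ or $y < u_j(\vec{x})$, and lower bounds of the form $y \geq l_k(\vec{x})$ or $y > l_k(\vec{x})$. Since $y$ is uniquely determined on each feasible $\vec{x}$, I would refine $D_i$ into sub-disjuncts $D_{i,j,k}$ obtained by additionally asserting $u_j(\vec{x}) \leq u_{j'}(\vec{x})$ for every other upper bound, $l_k(\vec{x}) \geq l_{k'}(\vec{x})$ for every other lower bound, and $u_j(\vec{x}) = l_k(\vec{x})$. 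On each such sub-disjunct one sets $e_{i,j,k} := u_j$ and, by substituting $y = u_j(\vec{x})$ everywhere in $D_{i,j,k}$, recovers $C_{i,j,k}$ as a finite set of linear (in)equalities in $\vec{x}$ alone. Since every feasible $\vec{x}$ in $D_i$ realises some minimising upper bound and some maximising lower bound, the union of sub-disjuncts covers the projection of $S_i$, and the collected system $\{\GLE{C_{i,j,k}}{e_{i,j,k}}\}$ represents $f$.

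The main obstacle I expect lies in the careful handling of strict inequalities: one must argue that any strict inequality on $y$ in $D_i$ is automatically non-binding on the feasible region (otherwise the feasible set of $y$ for a fixed $\vec{x}$ would be a nonempty open interval rather than a singleton), so the extrema $\min_j u_j$ and $\max_k l_k$ are always attained by non-strict bounds. Strict inequalities then migrate into the conditioning set as strict constraints on $\vec{x}$ after the substitution $y = u_j(\vec{x})$, and the sub-disjunct construction faithfully captures every point of the projection.
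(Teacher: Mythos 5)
Your proof is correct and follows essentially the same route as the paper's: a direct syntactic translation for the forward direction, and quantifier elimination plus a disjunctive-normal-form analysis of the upper and lower bounds on $y$ (with the observation that the unique value must be attained at non-strict bounds, and strict bounds turning into strict conditions on $\vec{x}$ after substitution) for the converse. The only differences are cosmetic: the paper encodes the system as a conjunction of implications rather than a disjunction of conjunctions, and indexes its conditioned expressions by non-strict lower bounds alone rather than by pairs of bounds, while your convexity/affineness remark is a harmless extra observation the paper does not need.
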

\noindent
We believe this result to be folklore. The proof is a straightforward application of quantifier elimination.

Combining Propositions~\ref{proposition:mu-term:arithmetic} and~\ref{proposition:cle:arithmetic} we obtain:

\begin{corollary}
\label{corollary:sledgehammer}
For every {\L}ukasiewicz $\mu$-term $t(x_1, \dots, x_n)$, the function 
\[
\vec{r} \mapsto t(\vec{r}) \colon [0,1]^n \to [0,1]
\]
is representable by a system of conditioned linear expressions in variables $x_1, \dots, x_n$. Furthermore a representing system can be computed from $t$.
\end{corollary}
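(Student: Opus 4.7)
The plan is simply to chain the two preceding propositions, since the corollary is essentially a formal composition of their computable constructions. Given a \L{}ukasiewicz $\mu$-term $t(x_1,\dots,x_n)$, I would first invoke Proposition~\ref{proposition:mu-term:arithmetic} to obtain a formula $F_t(x_1,\dots,x_n,y)$ of rational linear arithmetic defining the graph $\{(\vec{x},y)\in[0,1]^{n+1}\mid t(\vec{x})=y\}$, together with its effective construction from $t$. Monotonicity of every $\Lukmu$ operator (cf.\ Definition~\ref{definition:semantics}) guarantees that $\vec{r}\mapsto t(\vec{r})$ is a well-defined total function $[0,1]^n\to[0,1]$, so $F_t$ really does define the graph of a function.

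Next, I would apply the ``if''-direction of Proposition~\ref{proposition:cle:arithmetic}, which is asserted to be computable, to convert $F_t$ into a system $\mathcal{F}_t$ of conditioned linear expressions in $x_1,\dots,x_n$ that represents $\vec{r}\mapsto t(\vec{r})$. Composing the two computable steps $t\mapsto F_t\mapsto \mathcal{F}_t$ yields the required representing system along with an algorithm producing it from $t$, which is exactly the content of the corollary.

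Because both ingredients are already in place, there is no genuine obstacle specific to the corollary; the only conceptual work lies inside Proposition~\ref{proposition:cle:arithmetic}, where one applies Ferrante--Rackoff quantifier elimination to $F_t$ to obtain an equivalent quantifier-free formula, puts it in disjunctive normal form $\bigvee_i C_i(\vec{x},y)$, and, for each disjunct, uses the functionality of $F_t$ to isolate $y$ as a unique linear expression $e_i(\vec{x})$ on the convex region cut out by the $\vec{x}$-only constraints of $C_i$, thereby producing the conditioned linear expressions $\GLE{C_i'}{e_i}$. This manipulation is standard and subsumed by the invocation of Proposition~\ref{proposition:cle:arithmetic}, so the proof of the corollary reduces to citing and composing the two propositions.
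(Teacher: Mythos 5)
Your proposal is correct and matches the paper exactly: the corollary is obtained there simply by combining Propositions~\ref{proposition:mu-term:arithmetic} and~\ref{proposition:cle:arithmetic}, composing the two computable constructions just as you describe. Your additional sketch of the quantifier-elimination and disjunctive-normal-form argument inside Proposition~\ref{proposition:cle:arithmetic} also agrees with the paper's appendix proof of that proposition.
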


The computation of a representing system for $t$ via quantifier elimination,
provided by the proofs of
Propositions~\ref{proposition:mu-term:arithmetic} and~\ref{proposition:cle:arithmetic}, is indirect.
The goal of this section is to present an alternative algorithm for calculating
the value $t(\vec{r})$ of a $\mu$-term at rationals $r_1, \dots, r_n  \in [0,1]$, which is
directly based on manipulating conditioned linear expressions. 
Rather than computing an entire system of conditioned linear expressions representing $t$, the algorithm works locally to provide a single conditioned expression that applies to the input vector $\vec{r}$.

The algorithm takes, as input, a $\mu$-term $t(x_1, \dots, x_n)$ and a vector of rationals $(r_1, \dots, r_n) \in [0,1]^n$, and returns a conditioned linear expression $\GLE{C}{e}$, in variables $x_1, \dots, x_n$, with the following two properties.
\begin{itemize}
\item[(P1)] $C(\vec{r})$ is true.

\item[(P2)] For all $s_1, \dots, s_n \in \mathbb{R}$, if $C(\vec{s})$ is true then
$s_1, \dots, s_n \in [0,1]$ and $e(\vec{s}) =  t(\vec{s})$.
\end{itemize}
\noindent
It follows that $e(\vec{r}) =  t(\vec{r})$, so $e$ can indeed be used to compute the value $t(\vec{r})$.


\subsection{The algorithm}

The algorithm takes, as input, a $\mu$-term $t(x_1, \dots, x_n)$ and a vector of rationals $(r_1, \dots, r_n) \in [0,1]^n$, and returns a conditioned linear 
expression $\GLE{C}{e}$, in variables $x_1, \dots, x_n$, 
with the properties (P1) and (P2) above. For the purposes of the correctness proof in Section~\ref{section:correctness}, it is convenient to consider the running of the algorithm in the more general case that $r_1, \dots, r_n$ are arbitrary real numbers in $[0,1]$. This more general algorithm can be understood as an algorithm in the Real RAM (a.k.a.\ BSS) model of computation~\cite{BSS1989}. When the input vector is rational, all real numbers encountered during execution of the algorithm are themselves rational, and so the general Real RAM algorithm specialises to  a 
\emph{bona fide} (Turing Machine) algorithm in this case. Moreover, even in the case of irrational inputs, all linear expressions constructed in the course of the algorithm are rational.

The algorithm works recursively on the structure of the term $t$. We present illustrative cases for terms
$t_1 \strongor t_2$ and $\Mu{x_{n+1}}{t'}$. The latter is the critical case. 
The algorithm for $\Nu{x_{n+1}}{t'}$ is an obvious dualization.

If $t$ is $t_1 \strongor t_2$ then recursively compute $\GLE{C_1\,}{\,e_1}$ and
 $\GLE{C_2\,}{\,e_2}$. If $e_1(\vec{r}) + e_2(\vec{r}) \leq 1$ then return
\[
\GLE{C_1, \, C_2, \, e_1 + e_2 \leq 1  \, }{\, e_1 + e_2} \enspace .
\]
Otherwise, return
\[
\GLE{C_1, \, C_2, \, e_1 + e_2 \geq 1  \, }{\, 1} \enspace .
\]

In the case that $t$ is $\Mu{x_{n+1}}{t'}$, enter the following 
loop starting with $D = \emptyset$ and $d =0$. 

\paragraph{Loop:}
At the entry of the loop we have a finite set $D$ of inequalities between linear expressions in $x_1, \dots, x_n$, and we have a linear expression $d(x_1, \dots, x_n)$.
The loop invariant that applies is:
\begin{itemize}
\item[(I1)] $D(\vec{r})$ is true; and
\item[(I2)]  for all $\vec{s} \in [0,1]^n$, if $D(\vec{s})$ then $d(\vec{s}) \leq (\Mu{x_{n+1}}{t'})(\vec{s})$.
\end{itemize}
 \noindent
We think of $D$ as constraints propagated from earlier iterations of the loop, and of $d$ as the current approximation to the least fixed point subject to the constraints.

Recursively compute $t'(x_1, \dots, x_{n+1})$ at $(\vec{r},d(\vec{r}))$ as $\GLE{C}{e}$, where $e$ has the form:
\begin{equation}
\label{equation:e-form}
q_1\, x_1 + \dots + q_n \, x_n + q_{n+1}\,  x_{n+1} + q \enspace .
\end{equation}

In the case that $q_{n+1} \neq 1$, 
define the linear expression:
\begin{equation}
\label{equation:f}
f \: := \: \frac{1}{1-q_{n+1}} \, \left( \, q_1 \, x_1 + \dots + q_n \, x_n  + q \, \right) \enspace .
\end{equation}
Test if $C(\vec{r},f(\vec{r}))$ is true. If it is, 
exit the loop and return:
\begin{equation}
\label{equation:first-result}
\GLE{D \, \cup \, C(x_1, \dots, x_n, d(x_1, \dots, x_n) )\, \cup \, C(x_1, \dots, x_n, f(x_1, \dots, x_n) )\,}{\, f} 
\end{equation}
as the result of the algorithm for $\Mu{x}{t'}$ at $\vec{r}$.
Otherwise, if 
$C(\vec{r},f(\vec{r}))$ is  false, define  $N(x_1, \dots, x_n)$ to be the negation of the inequality $e_1(x_1, \dots, x_n, f(x_1, \dots x_n)) \vartriangleleft e_2(x_1, \dots, x_n, f(x_1, \dots x_n))$ (using 
$\vartriangleleft$ to stand for either $<$ or $\leq$), where
$e_1(x_1, \dots, x_{n+1})  \vartriangleleft e_2(x_1, \dots, x_{n+1})$ is a chosen inequality in $C$
for which $e_1(\vec{r}, f(\vec{r})) \vartriangleleft e_2(\vec{r}, f(\vec{r}))$ is false, and
go to \textbf{find next approximation} below.

In the case that $q_{n+1} = 1$, test the 
equality
$q_1 \, r_1 + \dots + q_n \, r_n  + q  = 0$.
If true, exit the loop with result:
\begin{equation}
\label{equation:second-result}
\GLE{D  \, \cup \, C(x_1, \dots, x_n, d(x_1, \dots, x_n) ) \, \cup \, \{q_1 \, x_1 + \dots + q_n \, x_n  + q  =  0\} \,} {\, d} \enspace .
\end{equation}
If instead 
$q_1 \, r_1 + \dots + q_n \, r_n  + q  \neq 0$, choose $N(x_1, \dots, x_n)$ to be whichever of 
the inequalities
\[
q_1 \, x_1 + \dots + q_n \, x_n  + q  \; <  \; 0 \qquad
0 \; < \; q_1 \, x_1 + \dots + q_n \, x_n  + q  
\]
is true for $\vec{r}$, 
and 
proceed with
\textbf{find next approximation} below.

\paragraph{Find next approximation:}
Arrange the inequalities in $C$ so they have the following structure. 
\begin{equation}
\label{equation:C}
C' 
\, \cup \, 
\{ x_{n+1} > a_i \}_{1 \leq i \leq l'} 
\, \cup \, 
\{ x_{n+1} \geq a_i \}_{l' < i \leq l} 
\, \cup \, 
\{ x_{n+1} \leq b_i \}_{1 \leq i \leq m'} 
\, \cup \, 
\{ x_{n+1} < b_i \}_{m' < i \leq m} 
\end{equation}
such that the only variables in the inequalities $C'$, and linear expressions $a_i, b_i$ are $x_1, \dots, x_n$. 
Choose $j$ with $1 \leq j \leq m$ such that 
$b_j(\vec{r}) \leq b_i(\vec{r})$ for all $i$ with $1 \leq i \leq m$. 
Then go back to \textbf{loop}, 
taking
\begin{equation}
\label{equation:next-iteration}
D \, \cup \, C(x_1, \dots, x_n, d(x_1, \dots, x_n))\,  \cup \, \{N(x_1, \dots, x_n)\}
\, \cup \, \{b_j \leq b_i \mid 1 \leq i \leq m\}
\qquad
e(\vec{x},b_j(\vec{x}))
\end{equation}
to replace $D$ and $d$ respectively.




\subsection{A simple example}
\label{section:example}
Consider the $\Lukmu$ term $t=\mu x. (     \mathbb{P}_{\geq \frac{1}{2}}x \, \sqcup \, \frac{1}{2})$, where $\mathbb{P}_{\geq \frac{1}{2}}x$ is the macro formula as in Definition~\ref{thresholds_encoding}, that is   $\mathbb{P}_{\geq \frac{1}{2}}x= \mathbb{P}_{=1}(    x \oplus \frac{1}{2}) = \nu y. ( y \odot (x \oplus  \frac{1}{2}))$. Thus,

$$
t=\mu x. \Big(  \nu y. \big( y \odot (x \oplus  \frac{1}{2})\big)  \sqcup \frac{1}{2}\Big)
$$
Here, $t^\prime(x)=  \nu y. \big( y \odot (x \oplus  \frac{1}{2})\big)  \sqcup \frac{1}{2}$ is a discontinuous function, and the value of $t$ is $1$.

We omit giving a detailed simulation of the algorithm on the subexpression $t'(x)$ at $x=r$. The result it produces, however,  is
$\{Ê0\leq x < \frac{1}{2}\}\vdash \frac{1}{2}$ if ${r}<\frac{1}{2}$, and 
$\{Ê\frac{1}{2} \leq x \leq 1\}\vdash 1$ if $r\geq \frac{1}{2}$.

We run the algorithm on input $\mu x.t^\prime(x)$. Set $D=\emptyset$ and $d=0$. Calculating $t^\prime(x)$ at $x = 0$ we obtain $C\vdash e$ as $\{Ê0\leq x < \frac{1}{2}\}\vdash \frac{1}{2}$.
We now need to calculate $f:=\frac{1}{1-0}(\frac{1}{2})=\frac{1}{2}$. The constraint $C(\frac{1}{2})$ does not hold. Thus we need to improve the approximation $d=0$.
Since $e=\frac{1}{2}$ is constant, the next approximation is $\frac{1}{2}$. The new set of constraints is still the emptyset. Thus we iterate the algorithm with $D=\emptyset$ and $d=\frac{1}{2}$. Calculating $t^\prime(x)$ at $x=\frac{1}{2}$ produces $C\vdash e$ as $\{Ê\frac{1}{2} \leq  x \leq 1\}\vdash 1$. Compute $f:=\frac{1}{1- 0}(1)=1$. Since $C(1)$ holds, the algorithm terminates with $\GLE{\emptyset}{1}$, as desired.

\subsection{Correctness of the algorithm}
\label{section:correctness}

%
\begin{theorem}
\label{theorem:algorithm-correct}
Let $t(x_1, \dots, x_n)$ be any {\L}ukasiewicz $\mu$-term.
Then, for every input vector $(r_1, \dots, r_n) \in [0,1]^n$, 
the above (Real RAM) algorithm terminates with a conditioned linear expression
$\GLE{C_{\vec{r}} \,}{\, e_{\vec{r}}}$ satisfying properties (P1) and (P2). 
Moreover, the set of all 
possible resulting conditioned linear expressions
\begin{equation}
\label{equation:result-set}
\{ \GLE{C_{\vec{r}} \,}{\, e_{\vec{r}}} \mid \vec{r} \in [0,1]^n \}
\end{equation}
is finite, and thus provides a representing system for the function $t \colon [0,1]^n \to [0,1]$.
\end{theorem}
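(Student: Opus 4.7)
The plan is to proceed by structural induction on $t$, establishing simultaneously the three assertions: termination, properties (P1)--(P2) for the returned expression, and finiteness of the output set~(\ref{equation:result-set}). The base cases (variables, constants) and the algebraic inductive cases (scalar multiplication and the four binary connectives $\weakor,\weakand,\strongor,\strongand$) are routine. For instance, in the case $t = t_1 \strongor t_2$ spelled out by the algorithm, the induction hypothesis supplies a finite pool of possible $\GLE{C_1}{e_1}$ and $\GLE{C_2}{e_2}$; a single test on $\vec{r}$ then selects between two linear constraints, so the output set stays finite and (P1)--(P2) follow from those of the subterms. The $\Nu$ case is dual to $\Mu$.

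The essential case is $t = \Mu{x_{n+1}}{t'}$. Fix $\vec{r}$ and consider one execution of the loop. I would first verify that the loop invariants (I1)--(I2) are preserved by the update~(\ref{equation:next-iteration}): (I1) is reinstated because $C(\vec{x},d(\vec{x}))$ holds at $\vec{r}$ by (P1) applied to the recursive call, while $N(\vec{x})$ and the inequalities $\{b_j \leq b_i\}$ hold at $\vec{r}$ by construction; (I2) is reinstated because the new linear expression $e(\vec{x},b_j(\vec{x}))$ is pointwise dominated by $(\Mu{x_{n+1}}{t'})(\vec{x})$ throughout the updated condition, which follows from monotonicity of $t'$, the identification $e = t'$ afforded by (P2) for the recursive call, and the fact that $b_j(\vec{x}) \leq (\Mu{x_{n+1}}{t'})(\vec{x})$ on the relevant region (using convexity of conditioning polytopes together with (I2)). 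At the exit clauses, the returned condition forces $f$ (respectively $d$) to be a fixed point of $y\mapsto t'(\vec{x},y)$ on the region: in~(\ref{equation:first-result}) because the included constraint $C(\vec{x},f(\vec{x}))$ together with the definition~(\ref{equation:f}) gives $e(\vec{x},f(\vec{x})) = f(\vec{x}) = t'(\vec{x},f(\vec{x}))$; in~(\ref{equation:second-result}) analogously. Combined with (I2) and Knaster--Tarski, this promotes the fixed point to the least one, giving (P2); (P1) is immediate from the construction.

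The main obstacle is termination, which I would handle together with finiteness. The outer induction hypothesis supplies a finite set $\mathcal{S}_{t'}$ of conditioned expressions returned by the recursive call on $t'$. Every linear expression $d$ and every constraint appended to $D$ is obtained by substituting one of the finitely many bounds $b_j(\vec{x})$ (themselves extracted from members of $\mathcal{S}_{t'}$) into members of $\mathcal{S}_{t'}$; hence the set of possible loop states is finite uniformly across all inputs. Within a single run on $\vec{r}$, the real number $d(\vec{r})$ strictly increases between iterations: monotonicity of $t'$ on the relevant region forces $q_{n+1} \geq 0$ in~(\ref{equation:e-form}); the failure of the exit test gives $d(\vec{r}) < f(\vec{r})$ in the $q_{n+1}\neq 1$ branch (and the analogous strict inequality in the $q_{n+1}=1$ branch); convexity of $C$ then ensures $d(\vec{r}) < b_j(\vec{r})$; and finally the next approximation value $e(\vec{r},b_j(\vec{r}))$ strictly exceeds $e(\vec{r},d(\vec{r})) = t'(\vec{r}, d(\vec{r})) \geq d(\vec{r})$. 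Strict progress inside a finite state space yields termination, and finiteness of the state space then gives finiteness of~(\ref{equation:result-set}); together with (P1)--(P2) this collection representing system for $t$.
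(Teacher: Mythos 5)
Your overall architecture --- structural induction, loop invariants (I1)--(I2), strict progress of the approximant for termination, and a finite pool of syntactic loop states for finiteness --- matches the paper's proof. But there is a genuine gap at the single most delicate step: the correctness of the exit clause (\ref{equation:first-result}). You argue that the returned condition forces $f(\vec{s})$ to be a fixed point of $y \mapsto t'(\vec{s},y)$ and that, ``combined with (I2) and Knaster--Tarski, this promotes the fixed point to the least one.'' It does not: from ``$f(\vec{s})$ is a fixed point'' and (I2) one only gets $d(\vec{s}) \leq (\Mu{x_{n+1}}{t'})(\vec{s}) \leq f(\vec{s})$, which leaves open that the algorithm returns a non-least fixed point. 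The missing idea is convexity of the conditioning set in the $x_{n+1}$ direction: since $C(\vec{s},d(\vec{s}))$ and $C(\vec{s},f(\vec{s}))$ both hold (both are included in the returned condition), $C(\vec{s},y)$ holds for every $y \in [d(\vec{s}),f(\vec{s})]$, hence $t'(\vec{s},\cdot)$ coincides with the affine map $e(\vec{s},\cdot)$ on that whole interval; because $q_{n+1} \neq 1$ that affine map has $f(\vec{s})$ as its \emph{unique} fixed point there, and since the least fixed point lies in $[d(\vec{s}),f(\vec{s})]$ it must equal $f(\vec{s})$. (For the clause (\ref{equation:second-result}) your ``(I2) plus fixed point'' reasoning does suffice, since there the candidate is $d$ itself, which is already known to be a lower bound; lumping the two clauses together hides that the first one needs the extra argument.)

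Second, your justification of strict progress puts the strictness in the wrong link of the chain. You claim the new approximant $e(\vec{r},b_j(\vec{r}))$ \emph{strictly} exceeds $e(\vec{r},d(\vec{r}))$, while $e(\vec{r},d(\vec{r})) = t'(\vec{r},d(\vec{r})) \geq d(\vec{r})$ only non-strictly. The first inequality can be an equality: $q_{n+1}$ may be $0$, and indeed in the paper's own worked example (Section 5.2) the returned $e$ is the constant $\frac{1}{2}$, so your stated reason for progress fails there. The correct source of strictness is the other link: $N(\vec{r})$ forces $C(\vec{r},r')$ to be false for $r' = (\Mu{x_{n+1}}{t'})(\vec{r})$, so convexity gives $d(\vec{r}) < r'$, and since $r'$ is the \emph{least prefixed point}, $d(\vec{r}) < t'(\vec{r},d(\vec{r})) = e(\vec{r},d(\vec{r})) \leq e(\vec{r},b_j(\vec{r}))$. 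Your auxiliary claim that failure of the test $C(\vec{r},f(\vec{r}))$ yields $d(\vec{r}) < f(\vec{r})$ is both unjustified and unnecessary. The conclusion $d_{i+1}(\vec{r}) > d_i(\vec{r})$ is correct, and your ``finite pool plus strict increase'' route to termination and finiteness is a legitimate (if quantitatively weaker) alternative to the paper's argument that no basis element of $t'$ is visited twice; but as written the strictness is not actually established.
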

\noindent
Before the proof it is convenient to introduce some terminology associated with the properties stated in the theorem.
For a $\mu$-term $t$, we call the cardinality of the set
(\ref{equation:result-set}) of possible results, $\GLE{C_{\vec{r}} \,}{\, e_{\vec{r}}}$,
the  \emph{basis size},
and we call the maximum number of inequalities in any $C_{\vec{r}}$ the 
\emph{condition size}.

\begin{proof} 
By induction on the structure of $t$. We verify the critical case when $t$ is $\Mu{x_{n+1}}{t'}$.

We show first that the loop invariants (I1), (I2) guarantee that any result  returned via (\ref{equation:first-result}) or (\ref{equation:second-result}) satisfies (P1) and (P2).
By induction hypothesis, the recursive computation of 
$t'(x_1, \dots, x_{n+1})$ at $(\vec{r},d(\vec{r}))$ as $\GLE{C}{e}$, where 
$e$ has the form 
$q_1\, x_1 + \dots + q_n \, x_n + q_{n+1}\,  x_{n+1} + q$  as in (\ref{equation:e-form}), satisfies:
$C(\vec{r}, d(\vec{r}))$; and, for all
$s_1, \dots, s_{n+1} \in \mathbb{R}$, if 
$C(s_1, \dots, s_{n+1})$ then $\vec{s} \in [0,1]^n$ and 
$t'(s_1, \dots, s_{n+1}) = e(s_1, \dots, s_{n+1})$. 

In the case that $q_{n+1} \neq 1$, the linear expression $f$, defined in
(\ref{equation:f}), maps any $s_1, \dots, s_n \in \mathbb{R}$ to the unique solution
$f(\vec{s})$ to the equation $x_{n+1} = e(s_1, \dots, s_n, x_{n+1})$ in $\mathbb{R}$.
Suppose that $D(\vec{s})$ holds.
Then, by  loop invariant (I2), $d(\vec{s}) \leq (\Mu{x_{n+1}}{t'})(\vec{s})$.
Suppose also that $C(\vec{s}, f(\vec{s}))$. Then 
$t'(\vec{s},f(\vec{s})) = e(\vec{s},f(\vec{s})) = f(\vec{s})$, i.e., 
$f(\vec{s})$ is a fixed point of $x_{n+1} \mapsto t'(\vec{s}, x_{n+1})$; whence, 
$(\Mu{x_{n+1}}{t'})(\vec{s}) \leq f(\vec{s})$.
Suppose, finally, that $C(\vec{s}, d(\vec{s}))$ also holds. Then, because
both $C(\vec{s}, d(\vec{s}))$ and $C(\vec{s}, f(\vec{s}))$, and 
$d(\vec{s}) \leq (\Mu{x_{n+1}}{t'})(\vec{s}) \leq f(\vec{s})$, we have, by the convexity of constraints, that 
$t'(\vec{s}, s_{n+1}) = e(\vec{s}, s_{n+1})$ for all $s_{n+1} \in [d(\vec{s}),f(\vec{s})]$. So $f(\vec{s})$ is the unique fixed-point of
$x_{n+1} \mapsto t'(\vec{s}, x_{n+1})$ on $[d(\vec{s}),f(\vec{s})]$. Since, $d(\vec{s}) \leq (\Mu{x_{n+1}}{t'})(\vec{s})$, 
we have $f(\vec{s}) = (\Mu{x_{n+1}}{t'})(\vec{s})$. This argument justifies that the conditioned linear expression of 
(\ref{equation:first-result}) satisfies (P2). It satisfies
(P1) just if $C(\vec{r},f(\vec{r}))$, which is exactly the condition under which (\ref{equation:first-result}) is returned as the result.

In the case that $q_{n+1} = 1$ then, for any  $s_1, \dots, s_n \in \mathbb{R}$, 
the equation $x_{n+1} = e(s_1, \dots, s_n, x_{n+1})$  has a solution if and only if
$q_1 \, s_1 + \dots + q_n \, s_n  + q  = 0$, 
in which case any $x_{n+1} \in \mathbb{R}$ is a solution. 
Suppose that $q_1 \, s_1 + \dots + q_n \, s_n  + q  = 0$ and 
$C(\vec{s}, d(\vec{s}))$ both hold. Then $t'(s_1, \dots, s_n,d(\vec{s})) =
e(\vec{s}, d(\vec{s})) = d(\vec{s})$, so $d(\vec{x})$ is a fixed point of 
$x_{n+1} \mapsto t'(\vec{s}, x_{n+1})$. If also $D(\vec{s})$ holds then, by
 loop invariant (I2), $d(\vec{x}) = (\Mu{x_{n+1}}{t'})(\vec{s})$.
We have justified that the conditioned linear expression of 
(\ref{equation:second-result}) satisfies (P2). It satisfies
(P1) just if $q_1 \, r_1 + \dots + q_n \, r_n  + q  = 0$,
which is exactly the condition under which (\ref{equation:second-result}) is returned as the result.

Next we show that the loop invariants are preserved through the computation. 
Properties (I1) and (I2) are trivially satisfied by the initial values 
$D = \emptyset$ and $d =0$. We must show that they are preserved when $D$ and $d$ are modified via (\ref{equation:next-iteration}), which happens when 
execution passes to \textbf{find next approximation}.
In this subroutine, the inequalities  in $C$ are first arranged as in (\ref{equation:C}) where,
as $C(\vec{r},d(\vec{r}))$, we must have  $m \geq 1$, as otherwise $C(\vec{r},s)$ would hold for all real $s \geq  d(\vec{r})$, contradicting that $C(\vec{r},s)$ implies $s \in [0,1]$. (Similarly, $l \geq 1$.)
%
Thus there indeed exists $j$ with $1 \leq j \leq m$ such that 
$b_j(\vec{r}) \leq b_i(\vec{r})$ for all $i$ with $1 \leq i \leq m$. 
It is immediate that the constraints in the modified
$D$  of (\ref{equation:next-iteration}) 
are true for $\vec{r}$. Thus (I1) is preserved.
To show (I2), 
suppose $s_1, \dots, s_n$ satisfy the constraints, i.e.,
\begin{equation*}
D(\vec{s})  \qquad  C(\vec{s}, d(\vec{s})) \qquad
N(\vec{s}) \qquad 
\{b_j(\vec{s}) \leq b_i(\vec{s}) \mid 1 \leq i \leq m\} \enspace .
\end{equation*}
Defining $r' = (\Mu{x_{n+1}}{t'})(\vec{s})$, by (I2) for $D,d$ we have 
$d(\vec{s}) \leq r'$. We must show that $e(\vec{s},b_j(\vec{s})) \leq r'$.
By the definition of  $N(x_1, \dots, x_{n})$, in either the 
$q_{n+1} \neq  1$ or  $q_{n+1} = 1$ case, $N(\vec{s})$ implies that 
$C(\vec{s},\, r')$ does not hold.
Because $C(\vec{s},d(\vec{s}))$ and by the choice of $j$, it holds that 
$C(\vec{s},s)$, for all $s \in [0,1]$ such that  $s = d(\vec{s})$  or $d(\vec{s}) < s < b_j(\vec{s})$.
Since $C(\vec{s},r')$ is false and $d(\vec{s}) \leq r'$, it follows from the convexity of the conditioning set $C$ that, for every $s$
with  $s = d(\vec{s})$  or $d(\vec{s}) < s < b_j(\vec{s})$, we have $s < r'$.  
Whence, since $r'$ is the least prefixed point for $x_{n+1} \mapsto {t'}(\vec{s},x_{n+1})$, 
also $s < {t'}(\vec{s},s) \leq r'$, i.e., 
\begin{equation}
\label{equation:little}
s < e(\vec{s},s) \leq r' \enspace .
\end{equation}
Thus, $e(\vec{s},b_j(\vec{s})) = \sup \{ e(\vec{s},s) \mid 
\text{$s = d(\vec{s})$ or $d(\vec{s}) \leq s < b_j(\vec{s})$} \} \leq r'$.
Thus, $e(\vec{s},b_j(\vec{s})) \leq r'$, i.e., it is an approximation to the fixed point.
Moreover, it is a good new approximation to choose in the sense that:
\begin{equation}
\label{equation:better}
d(\vec{s}) < e(\vec{s},b_j(\vec{s}))~\text{~and ~ not}~
C(\vec{s}, e(\vec{s},b_j(\vec{s}))) \enspace .
\end{equation}
The former holds because $d(\vec{s}) < e(\vec{s},d(\vec{s}))$, by
(\ref{equation:little}),  and $d(\vec{s}) \leq b_j(\vec{s})$. The latter because if
$C(\vec{s}, e(\vec{s},b_j(\vec{s})))$ then, in particular, 
$e(\vec{s},b_j(\vec{s})) \leq b_j(\vec{s})$, so
$b_j(\vec{s}) = e(\vec{s},b_j(\vec{s})) = r'$, contradicting that 
not $C(\vec{s},r')$.

To show termination, by induction hypothesis, collecting all possible results of running the 
algorithm on $t'$ produces a representing system for $t' \colon [0,1]^{n+1} \to [0,1]$:
\begin{equation}
\label{equation:list}
\GLE{C_1\,}{\,e_1} \quad \dots \quad \GLE{C_{k'}\,}{\,e_{k'}} \enspace ,
\end{equation}
where $k'$ is the basis size of $t'$.
We now analyse the execution of the algorithm for $\Mu{x_{n+1}}{t'}$  on a given input vector $(r_1, \dots, r_n)$. On iteration number $i$, the loop is 
entered with constraints $D_i$ and approximation $d_i$ (where 
$D_1 = \emptyset$ and $d_1 = 0$), after which the recursive call to the 
algorithm for $t'$ yields one
of the conditioned linear expressions, $\GLE{C_{k_i}\,}{\,e_{k_i}}$, from
(\ref{equation:list}) above,
such that
$C_{k_i}(\vec{r}, d_i(\vec{r}))$ holds. Then, depending on conditions involving 
only $\GLE{C_{k_i}\,}{\,e_{k_i}}$ and $\vec{r}$, either a 
result is returned, or $D_{i+1}$ and $d_{i+1}$ are constructed for the 
loop to be repeated.
By (\ref{equation:better}), at iteration $i+1$ of the loop, we have
$d_{i+1} (\vec{r}) > d_i(\vec{r})$ and also $C_{k_i}(\vec{r}, d_{i+1}(\vec{r}))$ is false. 
Since each conditioning set
is convex, it follows that no $C_j$ can occur twice in the list $C_{k_1}, C_{k_2}, \dots$. Hence the algorithm must exit the loop after at most $k'$ iterations. Therefore, the computation for $\Mu{x}{t'}$ at $\vec{r}$ terminates. 

It remains to show that the algorithm for $\Mu{x}{t'}$  produces only finitely many conditioned linear expressions $\GLE{C_{\vec{r}}\,}{\,e_{\vec{r}}}$. The crucial observation is that the vector $\vec{r}$ is used only to determine the control flow of the algorithm, i.e., which branches of conditional statements are followed, 
the choices made in selecting $N$ and $b_j$ in  (\ref{equation:next-iteration}), and the order 
in which the different $\GLE{C_{j}\,}{\,e_{j}}$, from
(\ref{equation:list}) are visited
(given by the sequence $k_1, k_2, \dots $ of values taken by $j$).
Using this, if $l'$ is the condition size of $t'$,
then a loose upper bound is that
the number of possible results $\GLE{C_{\vec{r}}\,}{\,e_{\vec{r}}}$ for the algorithm for $\Mu{x_{n+1}}{t'}$ is at most $(k'(l')^2)^{k'}$, and the number of inequalities in 
$C_{\vec{r}}$ is at most $2k'l'$.
\end{proof}

The above proof gives a truly abysmal complexity bound for the algorithm.
Let the basis and condition size for the term $t'(x_1, \dots , x_{n+1})$ be
$k'$ and $l'$ respectively. 
Then, as in the proof, the basis and condition size for
$\Mu{x_{n+1}}{t'}$ are respectively bounded by:
\[
k \: \leq\: (k'(l')^2)^{k'} ~~ \text{and}~~  l \: \leq \: 2k'l' \enspace .
\]
Using these bounds, the basis and condition size have non-elementary growth in the number of fixed points in a term $t$.

\subsection{Comparison}\label{section:comparison}

According to the crude complexity analyses we have given, the evaluation of 
{\L}ukasiewicz $\mu$-terms via rational linear arithmetic is (in having doubly-\ and triply-exponential space and time complexity bounds) 
preferable to the (non-elementary space and hence time) evaluation via the direct algorithm. 
Nevertheless, we expect the direct algorithm to work better than this in practice. 
Indeed, a  main  motivating factor in the design of the direct algorithm 
is that the algorithm for $\Mu{x_{n+1}}{t'}$ 
only explores as much of the basis set for $t'$ as it needs to, and does so in an order that is tightly constrained by the monotone improvements made to the approximating $d$ expressions along the way. In contrast, the crude
complexity analysis is based on a worst-case scenario in which the
algorithm  is assumed to visit the entire basis for 
$t'$, and, moreover, to do so, for different input vectors $\vec{r}$, in every
possible order for  visiting the different basis sets. 
Perhaps better bounds can be obtained by a more careful analysis of the algorithm. 

\section{Model checking}
\label{section:model-checking}

Let $\phi$ be a closed $\Lukmu$ formula and $(S,\rightarrow)$ a finite rational PNTS. 
We wish to compute the value $\sem{\phi}(s)$ at any given state $s \in S$.
We do this by effectively producing a closed $\mu$-term $t_s(\phi)$, with the property that $t_s (\phi) = \sem{\phi}(s)$, whence the rational value of $\sem{\phi}(s)$ can be calculated by the algorithm in Section~\ref{section:algorithm}.

We assume, without loss of generality, that all fixed-point operators in $\phi$ bind distinct variables. Let $X_1, \dots, X_m$ be the variables appearing in $\phi$.
We write $\sigma_i \, X_i. \, \psi_i$ for the unique subformula of $\phi$ in which $X_i$ is bound. The strict (i.e., irreflexive) \emph{domination} relation $X_i \dominates X_j$ between variables is defined to mean that $\sigma_j \, X_j. \, \psi_j$ occurs as a subformula in $\psi_i$.

Suppose $|S| = n$. For each $s \in S$, we translate $\phi$ to a $\mu$-term $t_s(\phi)$ containing at most $mn$ variables $x_{i,s'}$, where $1 \leq i \leq m$ and $s' \in S$.
The translation is defined using a more general function
$t^\Gamma_s$, defined on subformulas of $\phi$, where 
$\Gamma \subseteq
\{1, \dots, m\} \times S$ is an  auxiliary component keeping track of the states at which variables have previously been encountered. Given $\Gamma$ and $(i,s) \in \{1, \dots, m\} \times S$, we define:
\[
\Gamma \dominates (i,s) \; = \; 
(\Gamma \cup \{(i,s)\}) \backslash
\{ (j,s') \in \Gamma \mid X_i \dominates X_j\} \enspace .
\]
This operation is used in the definition below
to `reset' subordinate fixed-point variables
whenever a new variable that dominates them is declared. 
\begin{align*}
t^\Gamma_s (X_i) & =  
       \begin{cases} 
           x_{i,s} & \text{if $(i,s) \in \Gamma$} \\
           \sigma_i \,  x_{i,s} . \; t^{\Gamma \dominates (i,s)}_s(\psi_i) & \text{otherwise}
       \end{cases}
\\
t^\Gamma_s (P) & =  \underline{\rho(P)(s)}
\\
t^\Gamma_s (\negate{P}) & =  \underline{1-\rho(P)(s)}
\\
t^\Gamma_s (q \, \phi) & =  q \, t^\Gamma_s (\phi)
\\
t^\Gamma_s (\phi_1 \bullet \phi_2) & = 
     t^\Gamma_s (\phi_1) \bullet t^\Gamma_s (\phi_2) \qquad \bullet \in \{\weakor, \weakand, \strongor, \strongand\}
\\
t^\Gamma_s (\Diamond \phi) & = 
  \bigsqcup_{s \rightarrow d} ~ \bigoplus_{s' \in S} ~ d(s') \; t^\Gamma_{s'} (\phi) 
\\
t^\Gamma_s (\Box \phi) & = 
  \bigsqcap_{s \rightarrow d} ~ \bigoplus_{s' \in S} ~ d(s') \; t^\Gamma_{s'} (\phi) 
\\
t^\Gamma_s (\sigma_i \, X_i. \, \psi_i) & = 
  \sigma_i \,  x_{i,s} . \; t^{\Gamma \cup \{(i,s)\}}_s(\psi_i) 
\end{align*}
\noindent
This is well defined because changing from
$\Gamma$ to $\Gamma \dominates (i,s)$ or to $\Gamma \cup \{(i,s)\}$ strictly increases the function 
\[ 
i \mapsto | \{ (i,s) \mid (i,s) \in \Gamma\}| \colon \{1, \dots, m\} \to \{0, \dots, n\}\] 
under the lexicographic order on functions relative to $\dominates$.

\begin{proposition}
For any closed  $\Lukmu$ formula $\phi$, finite PNTS $(S, \rightarrow)$ and $s \in S$, it holds that $\sem{\phi}(s) = t^\emptyset_s(\phi)$.
\end{proposition}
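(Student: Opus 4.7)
The plan is to prove, by induction on the same well-founded measure used to justify the well-definedness of $t^\Gamma_s$ above (the lexicographic order on the function $i \mapsto |\{s' \mid (i,s') \in \Gamma\}|$ induced by $\dominates$, together with the structural complexity of the subformula), a strengthened statement covering open subformulas and non-empty $\Gamma$. Namely: for every subformula $\psi$ of $\phi$, every $\Gamma \subseteq \{1,\dots,m\} \times S$, every $s \in S$, and every assignment $\vec{v} = (v_{i,s'})_{(i,s') \in \Gamma}$ of values in $[0,1]$ to the free variables $x_{i,s'}$ of $t^\Gamma_s(\psi)$, we have $\sem{\psi}_\rho(s) = t^\Gamma_s(\psi)(\vec{v})$, where $\rho$ is the environment with $\rho(X_i)(s') = v_{i,s'}$ whenever $(i,s') \in \Gamma$. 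The theorem then follows by taking $\Gamma = \emptyset$ and using that $\phi$ is closed.

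The cases for propositional letters, scalar multiplication, Boolean connectives, and the variable case $X_i$ with $(i,s) \in \Gamma$ are all immediate from the induction hypothesis and the definitions. For the modal cases, the translation uses strong disjunction $\bigoplus_{s'} d(s') \, t^\Gamma_{s'}(\phi)$; since $d$ is a probability distribution and each $t^\Gamma_{s'}(\phi) \in [0,1]$ by the induction hypothesis, the convex combination lies in $[0,1]$, so $\bigoplus$ collapses to ordinary summation and matches the semantic definitions of $\sem{\Diamond \phi}$ and $\sem{\Box \phi}$ exactly. The variable case $X_i$ with $(i,s) \notin \Gamma$ reduces by unfolding to the fixed-point case below, with the measure strictly decreasing since $(i,s)$ is added to $\Gamma$.

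The fixed-point case is the crux. Translating $\sigma_i X_i. \, \psi_i$ at $s$ yields $\sigma_i \, x_{i,s}. \, t^{\Gamma \cup \{(i,s)\}}_s(\psi_i)$. Fully expanding the translation of $\psi_i$ produces, for every state $s'$ reachable from $s$ through modalities at which $X_i$ is encountered, a nested binder $\sigma_i \, x_{i,s'}$. By induction hypothesis, once all such binders are in place, the resulting body evaluates, for each vector of values $f(s') = v_{i,s'}$, to $\sem{\psi_i}_{\rho[f/X_i]}(s)$. Semantically, $\sem{\sigma_i X_i. \psi_i}_\rho(s)$ is the $s$-component of the simultaneous (least or greatest) fixed point over $[0,1]^S$ of the monotone operator $f \mapsto \bigl(s' \mapsto \sem{\psi_i}_{\rho[f/X_i]}(s')\bigr)$; by Bekic's theorem, applied in the complete lattice $[0,1]^S$, this simultaneous fixed point equals the nested sequential fixed point produced by the translation, provided the nesting order respects the dependency structure between variables. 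The operation $\Gamma \dominates (i,s')$ is precisely what guarantees this: when the translation re-unfolds $X_i$ at a new state $s'$, every pair $(j,s'')$ with $X_j$ dominated by $X_i$ is removed from $\Gamma$, forcing subordinate fixed points to be re-instantiated fresh inside the new $\sigma_i$-binder, so that each approximation for a dominated $X_j$ is recomputed relative to the current approximation of the dominating $X_i$. This is the standard alternating-fixed-point Bekic pattern used in classical $\mu$-calculus model checking, transferred to the quantitative lattice $[0,1]^S$. The main obstacle is precisely this bookkeeping: carefully formulating the Bekic decomposition along the $\dominates$ order and verifying that the reset mechanism faithfully realises it, so that the syntactically nested alternating fixed points produced by $t^\emptyset_s$ compute the same value as the simultaneous semantic fixed point.
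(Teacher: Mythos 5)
You should first note that the paper does not actually prove this proposition: it explicitly omits the ``laborious proof'' and merely points to Mader's reduction of modal $\mu$-calculus model checking to nested boolean fixed-point equations. Your overall strategy --- a strengthened induction following the recursion defining $t^\Gamma_s$, with Beki\'c's principle converting the simultaneous fixed point on $[0,1]^S$ into nested scalar fixed points, and the $\dominates$-reset accounting for alternation --- is exactly the kind of argument that reference suggests, and the peripheral cases are handled correctly (in particular the observation that $\bigoplus_{s'} d(s')\,t^\Gamma_{s'}(\phi)$ collapses to genuine summation because $d$ is a probability distribution, and the choice of induction measure).

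However, the strengthened statement you propose to prove is not correct as stated, and this is not cosmetic. You constrain $\rho$ only on the pairs $(i,s')\in\Gamma$, but $\sem{\psi}_\rho(s)$ also depends on $\rho(X_j)(s')$ at pairs $(j,s')\notin\Gamma$ for variables $X_j$ free in $\psi$: this occurs in the variable case $X_i$ with $(i,s)\notin\Gamma$, where the left-hand side is the arbitrary value $\rho(X_i)(s)$ while the right-hand side is a determined fixed-point value, and, more insidiously, at the pairs removed from $\Gamma$ by the reset $\Gamma\dominates(i,s)$, where $X_j$ remains free in $\psi_i$. The correct invariant must additionally require that for every relevant $(j,s')\notin\Gamma$, $\rho(X_j)(s')$ equals the value under $\vec{v}$ of the re-instantiated term $\sigma_j\,x_{j,s'}.\;t^{\Gamma\dominates(j,s')}_{s'}(\psi_j)$ --- i.e.\ $\rho$ must be the canonical extension of $\vec{v}$ determined by $\Gamma$ --- and one must then prove, by a Beki\'c/Gauss-elimination argument in the lattice $[0,1]^S$ organised along the $\dominates$ order, that this canonical extension is indeed the least (resp.\ greatest) fixed point componentwise. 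You correctly identify this bookkeeping as ``the main obstacle,'' but you then assert rather than carry out the step; since it is precisely the content the paper declares laborious enough to omit, the proposal as it stands has a genuine gap at its central point.
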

\noindent
We omit the laborious proof. It is reminiscent of 
the reduction of modal $\mu$-calculus model checking to a system of nested boolean fixed-point equations in Section 4 of~\cite{Mader1995}.

\section{Related and future work}

The first encodings of probabilistic temporal logics in a probabilistic version of the modal $\mu$-calculus were given in~\cite{CPN99}, where a version $\PCTLstar$, tailored to processes exhibiting probabilistic but not nondeterministic choice,  was translated into a 
non-quantitative probabilisitic variant of the $\mu$-calculus, which included 
explicit (probabilistic) path quantifiers but disallowed fixed-point alternation. 

In their original paper on quantitative  $\mu$-calculi~\cite{HM96}, Huth and Kwiatkowska attempted a model checking algorithm for alternation-free formulas in the version of $\Lukmu$ with $\strongor$ and $\strongand$ but without $\weakand$, $\weakor$ and
scalar multiplication. Subsequently, 
several authors have addressed the problem of 
computing (sometimes approximating) 
fixed points for monotone functions combining
linear (sometimes polynomial) expressions with $\min$ and $\max$ operations;
see \cite{GS2011} for a summary. However, such work has focused on (efficiently) finding outermost (simultaneous) fixed-points for systems of equations whose underlying monotone functions are continuous. The nested fixed points considered in the present paper give rise to the complication of non-continuous functions, as 
the example of Section \ref{section:example} demonstrates.

As future work, it is planned to run an experimental comparison of the direct algorithm against the reduction to  linear arithmetic. As suggested in Section \ref{section:comparison}, we expect the direct algorithm to work better in practice than the non-elementary upper bound on its complexity, given by our crude analysis, suggests. Furthermore, as a natural generalization of the approximation approach to computing fixed points, the direct algorithm should be amenable to optimizations such as the simultaneous solution of adjacent fixed points of the same kind, and the reuse of previous approximations when applicable due to  monotonicity considerations. Unlike the black-box reduction to linear arithmetic, based on quantifier elimination, the linear-constraint-based approach of the direct algorithm should also offer a flexible machinery helpful in the design of optimized procedures for calculating values of particular subclasses of $\Lukmu$-terms.
An important example is given by the fragment of $\Lukmu$ capable of encoding $\PCTL$ (see Remark \ref{remark_fragment_plmu}).

Our results on $\Lukmu$ are a contribution towards the development of a robust theory of fixed-point probabilistic logics. The simplicity of the proposed encoding of $\PCTL$ (see Remark \ref{remark_fragment_plmu} above) suggests that the direction we are following is promising. In a follow-up paper, by the first author, it will be shown that the process equivalence characterised by {\L}ukasiewicz $\mu$-calculus is the standard notion of \emph{probabilistic bisimilarity}~\cite{S95}. Thus the quantitative approach to probabilistic $\mu$-calculi may be considered equally suitable as a mechanism for 
characterising process equivalence as the  non-quantitative $\mu$-calculi
advocated for this purpose in~\cite{CPN99} and~\cite{DvG2010}.

Further research will have to explore the relations between quantitative $\mu$-calculi such as $\Lukmu$ and other established frameworks for verification and design of probabilistic systems. Important examples include the \emph{abstract probabilistic automata} of~\cite{DKLLPSW}, the compositional \emph{assume-guarantee} techniques of~\cite{KNPQ2010,FKNPQ2011} and the recent \emph{p-automata}  of~\cite{pautomata2012}. In particular, with respect to the latter formalism, we note that the acceptance condition of p-automata is specified in terms of stochastic games whose configurations may have preseeded threshold values whose action closely resembles that of the threshold modalities considered in this work (Definition \ref{thresholds_encoding}). Exploring the relations between p-automata games and $\Lukmu$-games~\cite{MioThesis} could shed light on some underlying fundamental ideas.

\section*{Acknowledgements}
We thank Kousha Etessami, Grant Passmore, Colin Stirling and the anonymous reviewers for helpful comments  and for pointers to the literature.

The first author carried out this work during the tenure of an ERCIM ``Alain Bensoussan'' Fellowship, supported by the Marie Curie Co-funding of Regional, National and International Programmes (COFUND) of the European Commission. 

\bibliographystyle{eptcs}


\appendix

\section{Appendix: some omitted proof details}

We add detail to the outlined proof of Theorem~\ref{theorem:pctl}, by supplying the omited argument for the equality 
\[
\bigsqcup_\sigma \{ m^s_\sigma(\Psi)\} = \sem{\mu X.F(X)\big) }_\rho(s) \enspace ,
\]
which appears as case \ref{case_threshold_3}. Although game semantics provides the most intuitive justification, we instead give a direct denotational proof, in order to avoid introducing game-theoretic machinery.

\begin{proof}[Expanded proof of Theorem~\ref{theorem:pctl}]
{Case \ref{case_threshold_3}} $\mathbf{(\leq)}$. We first show that 
\begin{equation}\label{proof_eq_1}
 \bigsqcup_\sigma \{ m^s_\sigma(\Psi)\} \leq \sem{\mu X.F(X)\big) }_\rho(s)
\end{equation}
Define $\Psi_k=\{ {s_0.s_1.s_2\dots} \ | \ s_0=s \textnormal{ and }\exists n\leq k . \big( s_n\in \gsem{\phi_2}_\rho \wedge \forall m<n.( s_m\in\gsem{\phi_1}_\rho ) \big)\}$. Clearly $\Psi=\bigcup_k \Psi_k$. Suppose Inequality \ref{proof_eq_1} does not hold. Then there exists some $k$ and scheduler $\sigma$ such that 
\begin{equation}\label{proof_eq_2}
m^s_\sigma(\Psi_k)> \sem{\mu X. F(X) }_\rho(s)
\end{equation}
We prove that this is not possible by induction on $k$.  In the $k=0$ case, since we are assuming $m^s_\sigma(\Psi_0)>0$, it holds that $s\in\gsem{\phi_2}_\rho$. By inductive hypothesis on $\phi_2$, we know that $\sem{\mathbf{E}(\phi_2)}(s)=1$ and this implies that $\mu X.F(X)=1$, which is a contradiction with the assumed strict inequality \ref{proof_eq_2}.  Consider the case $k+1$. Note that if $s\in\gsem{\phi_2}_\rho$ then, $\sem{\mu X.F(X) }_\rho(s)=1$ as before, contradicting Inequality \ref{proof_eq_2}. So assume $s\not\in\gsem{\phi_2}_\rho$. Since we are assuming $m^s_\sigma(\Psi_{k+1})>0$ it  must be the case that $s\in\gsem{\phi_1}_\rho$. Similarly, $m^s_\sigma(\Psi_{k+1})>0$ and $s\not\in\gsem{\phi_2}_\rho$ imply that $s\not\rightarrow$ does not hold. This means (see Definition \ref{scheduler_def}) that $\sigma(\{s\})$ is defined. Let $d=\sigma(\{s\})$ and observe that $m^s_\sigma(\Psi_{k+1})=\displaystyle \sum_{t\in S}d(t)m^t_{\sigma^\prime}(\Psi_k)$, where $\sigma^\prime(s_0,s_1,\dots,s_n)=\sigma(s,s_0,s_1,\dots,s_n)$.  By induction on $k$ we know that the inequality $m^t_{\sigma^\prime}(\Psi_k)\leq \sem{\mu X.F(X) }_\rho(t)$ holds for every $t \in S$. Thus, by definition of the semantics of $\Diamond$, we obtain
$ m^s_\sigma( \Psi_k) \leq \sem{\Diamond \big( \mu X.F(X) \big) }_\rho$. 
Recall that we previously assumed $s\not\in\gsem{\phi_2}_\rho$ and $s\in\gsem{\phi_1}_\rho$. Hence the equality
$$\sem{\Diamond \big( \mu X.F(X) \big) }_\rho(s) = \sem{\mathbf{E}(\phi_2) \sqcup ( \mathbf{E}(\phi_1) \sqcap \big( \Diamond \mu X.F(X) \big) )  }_{\rho}(s)$$
holds. The formula on the right is just the unfolding $F(\mu X.F(X))$ of $\mu X.F(X)$. This implies the desired contradiction. 

{Case \ref{case_threshold_3}}$\mathbf{(\geq)}$. We now prove that also the  inequality
\begin{equation}\label{proof_eq_3}
 \bigsqcup_\sigma \{ m^s_\sigma(\psi)\} \geq \sem{\mu X. F(X) }_\rho(s)
\end{equation}
holds.  By Knaster-Tarski theorem, $\sem{\mu X.F(X)}_\rho=\bigsqcup_\alpha \sem{F(X)}^{\alpha}_\rho$, where $\alpha$ ranges over the ordinals and $\sem{F(X)}^{\alpha}_{\rho^\alpha}$ with $\rho^{\alpha}=\rho[\bigsqcup_{\beta<\alpha} \sem{F(X)}_{\rho^\beta}/ X]$. We prove Inequality \ref{proof_eq_3} by showing, by transfinite induction, that for every ordinal $\alpha$ and $\epsilon >0$,  the inequality 
\begin{equation}
\bigsqcup_\sigma \{ m^s_\sigma(\psi)\} > \sem{\mu X. F(X) }_{\rho^\alpha}(s)-\epsilon
\end{equation} 
holds, for all $s\in S$.  The case for $\alpha=0$ is immediate since $\sem{F}_{\rho^0}(s)>0$ if and only if $\sem{\mathbf{E}(\phi_2)}_\rho(s)=1$ and this implies  $\bigsqcup_\sigma \{ m^s_\sigma(\psi)\} =1$. Consider $\alpha=\beta+1$. If $\sem{\mathbf{E}(\phi_2)}_\rho(s)=1$ then Inequality \ref{proof_eq_3} holds as above. Thus assume $\sem{\phi_2}_\rho(s)=0$. Note that $\sem{F}_{\rho^0}(s)>0$ only if $s\in \sem{\mathbf{E}(\phi_1)}$. Thus assume $\sem{\mathbf{E}(\phi_1)}_\rho^\beta (s)=1$.
Under these assumption,  $\sem{F(X)}_{\rho^{\alpha}}= \sem{\Diamond F(X)}_{\rho^\beta}$ as it is immediate to verify. By definition of the semantics of $\Diamond$ we have: 
$$
\sem{\Diamond F(X)}_{\rho^\beta}(s)= \bigsqcup_{s\rightarrow d} \big( \displaystyle \sum_{t\in S} d(t)\sem{F(X)}_{\rho^\beta}(t)\big)
$$
By induction hypothesis on $\beta$ we know that for every $\epsilon$, 
$$
\sem{\Diamond F(X)}_{\rho^\beta}(s)  < \bigsqcup_{s\rightarrow d} \big( \displaystyle \sum_{t\in S} d(t)\Big( \bigsqcup_\sigma \{ m^t_\sigma(\psi)\} + \epsilon \Big) \big)
$$
For each $s\rightarrow d$ and $\sigma$ define $\sigma^d$ as $\sigma^d(\{s\})=d$ and $\sigma^d(s.t_0.\dots)=\sigma(t_0\dots)$. A simple argument shows that 
$$
\bigsqcup_{s\rightarrow d} \big( \displaystyle \sum_{t\in S} d(t)\Big( \bigsqcup_\sigma \{ m^t_\sigma(\psi)\} + \epsilon \Big) \big) =  \bigsqcup_{\sigma^d} \{ m^s_{\sigma^{d}}(\psi)\} + \epsilon
$$
and this conclude the proof for the case $\alpha=\beta+1$. Lastly, the case for $\alpha$ a limit ordinal follows straightforwardly from the inductive hypothesis on $\beta<\alpha$.
\end{proof}

\begin{proof}[Proof of Proposition~\ref{proposition:cle:arithmetic}]
Suppose we have a system of $k$ conditioned linear expressions representing $f$.
Each conditioned expression $\GLE{C\,}{\,e}$ is captured by the implication $(\bigwedge C) \rightarrow y = e$, so the whole system translates into a conjunction of $k$ such implications. To this conjunction, one need only add the range constraints $0 \leq 
z$ and $z \leq 1$ for each variable $z$, as further conjuncts. In this way, the graph is easily expressed as a quantifier free formula. (Since the implications are equivalent to disjunctions of atomic formulas, the resulting formula is naturally in conjunctive normal form.) 

Conversely, suppose $F(x_1, \dots, x_n, y)$ defines the graph of $f$. By  quantifier elimination, we can assume that $F$ is quantifier free and in disjunctive normal form. Then $F$ is a disjunction of conjunctions, where each conjunction, $K$, can be easily rewritten in the form
\begin{equation}
\label{equation:K}
\left(\bigwedge C  \right)
\, \wedge \, 
\left(\bigwedge_{1 \leq i \leq h} y > a_i \right)
\, \wedge \, 
\left(\bigwedge_{1 \leq i \leq k} y \geq b_i\right)
\, \wedge \, 
\left(\bigwedge_{1 \leq i \leq l} y \leq c_i\right)
\, \wedge \, 
\left(\bigwedge_{1 \leq i \leq m} y < d_i\right) \enspace ,
\end{equation}
such that the only variables in the finite set of atomic formulas $C$, and linear expressions $a_i, b_i, c_i, d_i$ are $x_1, \dots, x_n$. 
Since $F$ is the graph of a function, for all reals
$r_1, \dots, r_n$, there is at most one $s$ such that 
$K(\vec{r},s)$ holds, and, if it does, then all of $r_1, \dots, r_n, s$ are in $[0,1]$.
Given such an $s$, we therefore  have:
\[
\max\{a_i(\vec{r}) \mid 1 \leq i \leq h\} 
< 
\max\{b_i(\vec{r}) \mid 1 \leq i \leq k\} = s = \min\{c_i(\vec{r}) \mid 1 \leq i \leq l\} < \min\{d_i(\vec{r}) \mid 1 \leq i \leq m\} \enspace .
\]
A system of conditioned linear expressions for $f$ is thus obtained as follows.
For each conjunct $K$ in $F$, written in the form of (\ref{equation:K}) above,
and each $j$ with $1 \leq j \leq k$, include the conditioned linear expression:
\[
\GLE{
C, \,  
\{b_j > a_i \}_{1 \leq i \leq h}, \, 
\{b_j \geq b_i\}_{1 \leq i \leq k}, \, 
 \{b_j \leq c_i\}_{1 \leq i \leq l}, \,
 \{b_j <  d_i\}_{1 \leq i \leq m}, \,
}{\, b_j} \enspace .
\]
\end{proof}

We supplement the proof of Theorem~\ref{theorem:algorithm-correct} with more detail on the bounds on basis and condition size.

\begin{proof}[Expanded proof of Theorem~\ref{theorem:algorithm-correct}]
We analyse the control flow in the algorithm for 
$\Mu{x_{n+1}}{t'}$  on a given input vector $(r_1, \dots, r_n)$. On iteration number $i$, the loop is 
entered with constraints $D_i$ and approximation $d_i$, after which 
the recursive call to the 
algorithm for $t'$ yields one
of the conditioned linear expressions, $\GLE{C_{k_i}\,}{\,e_{k_i}}$.
Suppose that $C_{k_i}$ and $D_i$ contain $u$ and $v$ inequalities respectively. If the loop is exited producing (\ref{equation:first-result}) as result then the resulting $C_{\vec{r}}$ has $2u +v$ inequalities. 
If it is exited producing (\ref{equation:second-result}) as result then 
$C_{\vec{r}}$ has $u +v+2$ inequalities (where $u + v + 2 \leq 2u+v$ because $C_{k_i}$ has to enforce the range constraint $0 \leq x_{n+1} \leq 1$). Otherwise, the algorithm repeats the loop, entering iteration $i+1$ with $D_{i+1}$, given by~(\ref{equation:next-iteration}),
having at most $2u + v$ inequalities ($N$ contributes $1$ inequality, and
there are at most $u-1$ inequalities  $b_j \leq b_i$ in (\ref{equation:next-iteration}) since $l \geq 1$). 

Therefore, if $l'$ is now 
maximum number of inequalities occurring in any
$C_j$ from 
(\ref{equation:list}) (i.e., if it is the condition size for $t'$)
the algorithm for $\Mu{x_{n+1}}{t'}$ at $\vec{r}$, which runs for at most $k'$ iterations, results in $C_{\vec{r}}$ containing at most $2k'l'$ inequalities. 

To bound the number of  results $\GLE{C_{\vec{r}}}{e_{\vec{r}}}$, we count the possible control flows of the algorithm. At iteration $i$, the algorithm uses $\GLE{C_{k_i}\,}{\,e_{k_i}}$ from
(\ref{equation:list}), using which it might terminate with either 
(\ref{equation:first-result}) or (\ref{equation:second-result}), or it might
repeat the loop, 
entering iteration $i+1$ with $D_{i+1}$, given by~(\ref{equation:next-iteration}),
which can arise from $C_{k_i\,}$ in a number of ways determined by the possible
pairs of choices for $N$ and $b_j$ in (\ref{equation:next-iteration}).
In the case that the variable vector $(x_1, \dots, x_n)$ is empty (i.e., the term
$\Mu{x_{n+1}}{t'}$ is closed) the constraints in $D$ are redundant (they are simply true inequalities between rathionals) and so can be discarded. 
In the case that $n \geq 1$, there are at least $2$ inequalities in $C$ giving range constraints on $x_1$, so there are at most $l'$ choices for $N$
($l'-2$ choices in the case that $q_{n+1} \neq 1$, and $2$ in the case $q_{n+1}= 1$).
Irrespective of $n$, there are at most $l'-1$ choices for $b_j$ (taking $n$ into account this can be improved to $l'-2n-1$).
Therefore, the execution
of the algorithm, is determined by the sequence:
\[
k_1, \,  u_1, \, k_2, \, u_2, \, \dots, \, k_m , \, v
\]
where: $m \leq k'$ is the number of loop iterations performed; each $u_i$, where  $1 \leq u_i \leq l'(l'-1)$, represents the choice of $N$ and $b_j$ used in the construction of $D_{i+1}$ (\ref{equation:next-iteration}), and $v$ is $ 1$ or $2$ according to whether 
the resulting $\GLE{C_{\vec{r}}\,}{\,e_{\vec{r}}}$ is  returned 
via (\ref{equation:first-result}) or (\ref{equation:second-result}). Since each
number $k_i$ is distinct, the number of different such sequences is bounded by:
\begin{equation}
\label{equation:bound}
2 \sum_{m = 1}^{k'} \frac{k'!}{(k'-m)!} (l'\,(l'-1))^{m-1} \; \leq \; (k'(l')^2)^{k'} \enspace ,
\end{equation}
where the right-hand-side gives a somewhat loose upper bound. Therefore, the number of possible results $\GLE{C_{\vec{r}}\,}{\,e_{\vec{r}}}$ for the algorithm for $\Mu{x_{n+1}}{t'}$ is at most $(k'(l')^2)^{k'}$.
\end{proof}

\end{document}